
\documentclass[11pt]{article}

\usepackage[paper=letterpaper, margin=1in]{geometry}

\usepackage{ifthen}

\ifthenelse{\isundefined{\GenerateShortVersion}}
{
\def\LongVersion{}
\def\LongVersionEnd{}
\long\def\ShortVersion#1\ShortVersionEnd{}
}
{
\def\ShortVersion{}
\def\ShortVersionEnd{}
\long\def\LongVersion#1\LongVersionEnd{}
}


\newcommand{\Ignore}[1]{\ignorespaces}

\usepackage{amsmath, amssymb}

\usepackage{amsthm}

\usepackage{ifpdf}

\usepackage{authblk}



\ifpdf
\usepackage[pdftex]{graphicx}
\else
\usepackage[dvips]{graphicx}
\fi

\usepackage[%
  ocgcolorlinks,%
  linkcolor=blue,%
  filecolor=blue,%
  citecolor=blue,%
  urlcolor=blue]{hyperref}

\usepackage{enumitem}

\usepackage{xcolor}

\usepackage{mdframed}

\LongVersion 
\linespread{1.213} 
\LongVersionEnd 

\ShortVersion 
\usepackage{times}
\ShortVersionEnd 

\ShortVersion 
\renewcommand{\paragraph}[1]{\par\noindent\textbf{#1}}
\ShortVersionEnd 

\newtheorem{theorem}{Theorem}[section]
\newtheorem{lemma}[theorem]{Lemma}
\newtheorem{observation}[theorem]{Observation}
\newtheorem{corollary}[theorem]{Corollary}
\newtheorem{proposition}[theorem]{Proposition}

\theoremstyle{definition}
\newtheorem{property}[theorem]{Property}
\theoremstyle{plain}

\Ignore{
\newtheorem{theorem}{Theorem}
\newtheorem{lemma}[theorem]{Lemma}
\newtheorem{observation}[theorem]{Observation}
\newtheorem{corollary}[theorem]{Corollary}
\newtheorem{proposition}[theorem]{Proposition}

\theoremstyle{definition}
\newtheorem{property}[theorem]{Property}
\theoremstyle{plain}
}

\Ignore{
\newtheorem{theorem}{Theorem}[section]
\newtheorem{lemma}{Lemma}[section]
\newtheorem{observation}{Observation}[section]
\newtheorem{corollary}{Corollary}[section]

\theoremstyle{definition}

\theoremstyle{plain}
}

\LongVersion 
\newenvironment{MathMaybe}[0]
{\begin{displaymath}\ignorespaces}
{\end{displaymath}}
\LongVersionEnd 
\ShortVersion 

\ShortVersionEnd 

\newcommand{\Integers}{\mathbb{Z}}
\newcommand{\Reals}{\mathbb{R}}
\newcommand{\Ex}{\mathbb{E}}
\renewcommand{\Pr}{\mathbb{P}}
\DeclareMathOperator{\Support}{supp}

\newcommand{\calH}{\mathcal{H}}
\newcommand{\bfp}{\mathbf{p}}
\newcommand{\bfq}{\mathbf{q}}

\title{Exploitation of Multiple Replenishing Resources with Uncertainty\footnote{%
This manuscript serves as a supplementary material for \cite{paper}.
The research conducted in this manuscript was partially supported by Lev-Zion Scholarship for Outstanding Ph.D.\ Students from Peripheral Areas, The Minerva Center for Movement Ecology, the Mexican Agency for International Development Cooperation (AMEXCID) and the European Research Council (ERC) under the European Union's Horizon 2020 research and innovation program (grant agreement No 648032).
 Y.E.\ was supported in part by an Israeli Science Foundation grant number 1016/17.}
}

\author[1]{Amos Korman}
\author[2]{Yuval Emek}
\author[1]{Simon Collet}
\author[3]{Aya Goldshtein}
\author[3]{Yossi Yovel}

\affil[1]{CNRS \& University of Paris, Paris, France.}
\affil[2]{Technion --- Israel Institute of Technology, Haifa, Israel.}
\affil[3]{Tel-Aviv University, Tel-Aviv, Israel.}

\begin{document}
\date{}
\maketitle
\begin{abstract}
We consider an optimization problem in which a (single) bat aims to exploit the nectar in a set of $n$ cacti with the objective of maximizing the expected total amount of nectar it drinks. Each cactus $i \in [n]$ is characterized by a parameter $r_{i} > 0$ that determines the rate in which nectar accumulates in $i$. In every round, the bat can visit one cactus and drink all the nectar accumulated there since its previous visit. Furthermore, competition with other bats, that may also visit some cacti and drink their nectar, is modeled by means of a stochastic process in which cactus $i$ is emptied in each round (independently) with probability $0 < s_i < 1$. Our attention is restricted to purely-stochastic strategies that are characterized by a probability vector
$(p_1, \ldots, p_n)$ determining the probability $p_i$ that the bat visits cactus $i$ in each round. We prove that for every $\epsilon > 0$, there exists a purely-stochastic strategy that approximates the optimal purely-stochastic strategy to within a multiplicative factor of $1 + \epsilon$, while exploiting only a small core of cacti.
Specifically, we show that it suffices to include at most
$\displaystyle \frac{2 (1 - \sigma)}{\epsilon \cdot \sigma}$ cacti in the core, where $\sigma = \min_{i \in [n]} s_{i}$. We also show that this upper bound on core size is asymptotically optimal as a core of a significantly smaller size cannot provide a $(1 + \epsilon)$-approximation of the optimal purely-stochastic strategy.
This means that when the competition is more intense (i.e., $\sigma$ is larger), a strategy based on exploiting smaller cores will be favorable.
\end{abstract}
\section{Model and Definitions}
\label{section:model}
Consider the \emph{hungry bat} problem defined over a single decision making
\emph{bat} and
$n \in \Integers_{> 0}$
\emph{cacti}.
At any given time, each cacti holds a finite non-negative amount of
\emph{nectar}.
The execution progresses in discrete \emph{rounds}
$t = 1, \dots, T$
so that the bat visits one cactus in each round.
If the bat visits cactus
$i \in [n]$
in round
$t \in [T]$,
then it collects the whole amount of nectar held in $i$ at that time and
the cactus is emptied.

Initially, the cacti hold no nectar and then, as the execution progresses,
they are filled with nectar according to the following process.
Each cactus
$i \in [n]$
is associated with real parameters
$r_{i} > 0$
and
$0 < s_{i} < 1$.
At the beginning of round
$t \in [T]$,
the amount of nectar in cactus $i$ is increased by $r_{i}$ units
and following that, a \emph{stealing event}, in which the cactus is emptied
from all the nectar held there, occurs w.p.\
$s_{i}$.

Formally, for cactus
$i \in [n]$
and round
$t \in [T]$,
let
$C_{i}(t) \in \Reals_{\geq 0}$
be the random variable that captures the amount of nectar held in cactus $i$
at the beginning of round $t$ and let
$B_{i}(t) \in \Reals_{\geq 0}$
be the random variable that captures the amount of nectar collected by the bat
from cactus $i$ in round $t$.
Conditioning on
$C_{i}(t) = z$,
we define the random variable
\[
C'_{i}(t)
\, = \,
\begin{cases}
z + r_{i}, & \text{w.p.\ } 1 - s_{i} \\
0, & \text{w.p.\ } s_{i}
\end{cases} \, .
\]
and based on that, the random variables $B_{i}(t)$ and
$C_{i}(t + 1)$
are set as
\[
B_{i}(t)
\, = \,
\begin{cases}
C'_{i}(t), &
\text{if the bat visits cactus $i$ in round $t$} \\
0, & \text{otherwise}
\end{cases}
\]
and
\[
C_{i}(t + 1)
\, = \,
C'_{i}(t) - B_{i}(t) \, .
\]

The bat wishes to maximize the total nectar amount
$\sum_{i \in [n]} \sum_{t \in [T]} B_{i}(t)$
it collects from all cacti throughout the execution.
This expression is a random variable, subject to the probabilistic nectar
stealing events and the coin tosses of the bat's strategy (if any), and the
goal is to maximize its expected value.

The bat's strategy is called \emph{purely-stochastic} if there exists some
probability vector
$\bfp = (p_{1}, \dots, p_{n})$
such that the bat visits cactus
$i \in [n]$
in round $t$ with probability $p_{i}$ for every
$t \in [T]$,
independently of the cacti visited in other rounds.
To emphasize that cactus
$i \in [n]$
is visited with probability $p_{i}$ under the purely-stochastic strategy
$\bfp$, we often write $B_{i}^{p_{i}}(\cdot)$ instead of $B_{i}(\cdot)$.
Unless stated otherwise, all strategies considered hereafter are
purely-stochastic.

Given a hungry bat instance
$\calH = \langle n, \{ r_{i} \}_{i \in [n]}, \{ s_{i} \}_{i \in [n]} \rangle$,
our aim is to design a (purely-stochastic) strategy
$\bfp = (p_{1}, \dots, p_{n})$
that maximizes the expected per-round amount of nectar collected by a bat that
follows $\bfp$ over $T$ rounds as
$T \rightarrow \infty$,
denoted by
\[
\mathcal{B}_{\calH}(\bfp)
\, = \,
\lim_{T \rightarrow \infty}
\Ex \left(
\frac{1}{T} \sum_{i \in [n]} \sum_{t = 1}^{T} B_{i}^{p_{i}}(t)
\right) \, .
\]
We refer to a strategy $\bfp$ that maximizes $\mathcal{B}_{\calH}(\bfp)$ as an
\emph{optimal} strategy for $\calH$.

Our contribution regarding the hungry bat problem is two-fold.
First, we develop some non-trivial insights on the structure of the optimal
strategies.

\begin{theorem} \label{theorem:optimal-strategy}
Every $n$-cacti hungry bat instance $\calH$ admits a unique optimal strategy
$\bfp^{*} = (p^{*}_{1}, \dots, p^{*}_{n})$
that can be constructed by a simple computationally efficient method.
\end{theorem}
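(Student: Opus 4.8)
The plan is to give a closed form for the objective $\mathcal{B}_{\calH}$, to observe that maximizing it is a separable, strictly-concave optimization over the probability simplex, and then to solve that by a ``water-filling'' argument.

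\textbf{Step 1 (closed form for the objective).} Fix a strategy $\bfp$ and a cactus $i \in [n]$, and track the scalar sequence $c_t := \Ex\bigl(C_i(t)\bigr)$. Since, under a purely-stochastic strategy, the indicator that the bat visits $i$ in round $t$ and the round-$t$ stealing event are each independent of $C_i(t)$, one gets the affine recurrence $c_1 = 0$ and $c_{t+1} = \alpha_i (c_t + r_i)$ with $\alpha_i := (1 - p_i)(1 - s_i)$, together with $\Ex\bigl(B_i^{p_i}(t)\bigr) = p_i (1 - s_i)(c_t + r_i)$. Because $s_i > 0$ we have $\alpha_i < 1$, so $c_t$ converges geometrically to the fixed point $\alpha_i r_i / (1 - \alpha_i)$; hence $\Ex\bigl(B_i^{p_i}(t)\bigr)$ converges, and by averaging (Cesàro) the running average in the definition of $\mathcal{B}_{\calH}$ converges to the same value. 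Summing the per-cactus limits yields
\[
\mathcal{B}_{\calH}(\bfp) \;=\; \sum_{i \in [n]} f_i(p_i), \qquad f_i(p) \;:=\; \frac{r_i (1 - s_i)\, p}{s_i + (1 - s_i)\, p} \,.
\]

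\textbf{Step 2 (existence and uniqueness).} An elementary computation gives $f_i'(p) = r_i s_i (1 - s_i) / \bigl(s_i + (1 - s_i) p\bigr)^2 > 0$ and $f_i''(p) < 0$ on $[0,1]$, so each $f_i$ is continuous, strictly increasing, and strictly concave, with $f_i(0) = 0$. Since the bat visits exactly one cactus per round, the feasible set is the simplex $\Delta = \{\bfp \in \Reals_{\geq 0}^n : \sum_{i} p_i = 1\}$ (and were the constraint only $\sum_i p_i \le 1$, monotonicity of the $f_i$ would force equality at an optimum anyway). Thus $\bfp \mapsto \sum_i f_i(p_i)$ is continuous and strictly concave on the nonempty compact convex set $\Delta$: it attains its maximum, and the maximizer $\bfp^{*}$ is unique. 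This gives the first claim of the theorem.

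\textbf{Step 3 (efficient construction).} To make $\bfp^{*}$ explicit, I would invoke the KKT conditions for maximizing a separable concave function over $\Delta$: there is a multiplier $\lambda^{*} > 0$ with $f_i'(p_i^{*}) = \lambda^{*}$ whenever $p_i^{*} > 0$ and $f_i'(0) \le \lambda^{*}$ whenever $p_i^{*} = 0$. Inverting $f_i'$ yields, on the support, the closed form $p_i^{*} = \bigl(\sqrt{r_i s_i (1 - s_i)/\lambda^{*}} - s_i\bigr)/(1 - s_i)$, which is positive exactly when $\lambda^{*} < f_i'(0) = r_i(1 - s_i)/s_i$. Hence, after sorting the cacti by the decreasing key $r_i(1 - s_i)/s_i$, the support of $\bfp^{*}$ is a prefix $\{1, \dots, k\}$; for each candidate prefix the normalization $\sum_{i \le k} p_i(\lambda) = 1$ pins $\lambda$ down in closed form (the left-hand side is affine in $1/\sqrt{\lambda}$), and exactly one $k$ produces a $\lambda$ consistent with that prefix — an $O(n \log n)$ procedure. (Monotonicity of $\lambda \mapsto \sum_i \max\{p_i(\lambda), 0\}$ also permits a binary search on $\lambda^{*}$.)

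\textbf{Main obstacle.} Steps 2 and 3 are routine convex-optimization bookkeeping. The step needing genuine care is Step 1: justifying that the per-round expectation converges for \emph{every} $\bfp$ — including the degenerate cases $p_i \in \{0, 1\}$ — so that the limit defining $\mathcal{B}_{\calH}$ is well-posed, and carrying out the algebra that turns the fixed point $\alpha_i r_i/(1 - \alpha_i)$ into the rational function $f_i$.
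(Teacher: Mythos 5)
Your proposal is correct, and it reaches the theorem by a genuinely different route in two places. For the closed form of the objective, the paper works per cactus through a chain of equivalent stochastic processes ($\pi, \pi_{1}, \pi_{2}, \pi_{3}$): it conditions on the gap between consecutive visits (Observation~\ref{observation:gap-determines-amount}), redistributes the collected nectar over the intervening rounds, computes the law of the straddling gap $X(t)$ (Lemma~\ref{lemma:length-geometric-phase}), and sums a series to obtain $b_{i}(p) = r_{i}p/(p + \tfrac{s_{i}}{1-s_{i}})$; your affine recurrence $c_{t+1} = (1-p_{i})(1-s_{i})(c_{t}+r_{i})$ on the expected content arrives at the identical function (clear denominators by $1-s_{i}$ to see $f_{i}=b_{i}$) more directly, and the geometric convergence plus Ces\`{a}ro averaging handles the limit in the definition of $\mathcal{B}_{\calH}$ at least as rigorously as the paper's informal equivalences between $\pi_{1}$, $\pi_{2}$ and $\pi_{3}$. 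For uniqueness, the paper proves that an $n\times n$ coefficient matrix has full rank (Lemma~\ref{lemma:solid-unique-optimal}) and then runs a case analysis over ``solid'' and non-solid instances (Corollary~\ref{corollary:unique-optimal}); your observation that each summand is \emph{strictly} concave, so the separable objective is strictly concave on the simplex, yields uniqueness in one line --- a real simplification the paper could have used, since its own formula \eqref{equation:b-function-derivatives} shows $b_{i}''<0$. Your Step~3 is essentially the paper's Lemma~\ref{lemma:optimality-condition} together with Observation~\ref{observation:prefix}, packaged as standard water-filling with a closed-form inversion of $f_{i}'$ in place of the paper's incremental prefix-growing algorithm; both are efficient. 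The only point I would ask you to spell out is the parenthetical claim that $\lambda \mapsto \sum_{i}\max\{p_{i}(\lambda),0\}$ is continuous and strictly decreasing where positive, which is what makes ``exactly one $k$ is consistent'' true; everything else is complete.
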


Following that, we prove that although the optimal strategy $\bfp^{*}$ may
have a large support
$\Support(\bfp^{*}) = \{ i \in [n] \mid p^{*}_{i} > 0 \}$,
the bat can in fact collect almost all the nectar amount guaranteed by
$\bfp^{*}$ while visiting only a small subset $S$ of the cacti, referred to as
a \emph{core}.

\begin{theorem} \label{theorem:small-core}
Consider a hungry bat instance
$\calH = \langle n, \{ r_{i} \}_{i \in [n]}, \{ s_{i} \}_{i \in [n]} \rangle$
with optimal strategy $\bfp^{*}$
and let
$\sigma = \min_{i \in [n]} s_{i}$.
For every
$0 < \varepsilon < 1$,
there exists a core
$S = S(\varepsilon) \subseteq [n]$
of size
\[
|S|
\, \leq \,
\frac{2 (1 - \sigma)}{\varepsilon \cdot \sigma}
\]
and a strategy $\bfp$ for $\calH$ whose support is
$\Support(\bfp) = S$
such that
$\mathcal{B}_{\calH}(\bfp)
\geq
(1 - \varepsilon) \cdot \mathcal{B}_{\calH}(\bfp^{*})$.
\end{theorem}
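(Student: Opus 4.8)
The plan is to work with the closed form and the water-filling description of the optimum that underlie Theorem~\ref{theorem:optimal-strategy}. Recall that $\mathcal{B}_{\calH}(\bfp) = \sum_{i \in [n]} f_i(p_i)$ with $f_i(p) = r_i(1-s_i)p / (p + s_i - p s_i)$ strictly increasing, strictly concave, and vanishing at $0$, and that $\bfp^{*}$ is characterized by a level $\lambda > 0$ with $f_i'(p^{*}_i) = \lambda$ on $\Support(\bfp^{*})$ and $f_i'(0) \le \lambda$ elsewhere. Writing $a_i = 1-s_i$ and, for $i \in \Support(\bfp^{*})$, $g_i = \sqrt{r_i a_i s_i/\lambda}$, stationarity reads $s_i + a_i p^{*}_i = g_i$, and a short computation gives $f_i(p^{*}_i) = r_i(1 - s_i/g_i) = \lambda\bigl(p^{*}_i + a_i (p^{*}_i)^2/s_i\bigr)$; summing, $\mathcal{B}_{\calH}(\bfp^{*}) = \lambda(1 + \Phi)$ where $\Phi = \sum_{i \in \Support(\bfp^{*})} a_i (p^{*}_i)^2/s_i \ge 0$, so in particular $\mathcal{B}_{\calH}(\bfp^{*}) \ge \lambda$.

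I would then pick $\kappa$ to be the least integer with $2(1-\sigma)/(\sigma\kappa) \le \varepsilon$ and define the core as follows: if $|\Support(\bfp^{*})| \le \kappa$ take $S = \Support(\bfp^{*})$ and $\bfp = \bfp^{*}$ (nothing to prove); otherwise let $S$ be the set of $\kappa$ cacti with the largest $p^{*}_i$, and let $\bfp = \bfp^{(S)}$ be the (unique, efficiently computable) optimal strategy of the sub-instance $\calH$ restricted to $S$. Since every $i \in S$ has $p^{*}_i > 0$, one checks that $\Support(\bfp^{(S)}) = S$ and that $\mathcal{B}_{\calH}(\bfp) = \mathcal{B}^{*}_S$, the optimal value of the sub-instance. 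Abbreviate $\mu = \sum_{i \in S} p^{*}_i$, $\nu = 1 - \mu$, $\Phi_{\bar S} = \sum_{i \notin S} a_i (p^{*}_i)^2/s_i$, and $G_S = \sum_{i \in S} g_i/a_i$.

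The crux of the proof is the exact identity
\[
\mathcal{B}^{*}_S \;=\; \mathcal{B}_{\calH}(\bfp^{*}) \;-\; \lambda\,\Phi_{\bar S} \;-\; \frac{\lambda\,\nu^2}{G_S + \nu} \,,
\]
and this is the step I expect to be the main obstacle. To obtain it, let $\lambda_S$ be the water level of the sub-instance; removing cacti cannot raise the level, so $\lambda_S \le \lambda$, hence $g^{(S)}_i := \sqrt{r_i a_i s_i/\lambda_S} = \theta g_i$ with $\theta = \sqrt{\lambda/\lambda_S} \ge 1$ and $s_i + a_i p^{(S)}_i = g^{(S)}_i$ for all $i \in S$. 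Subtracting $\sum_{i \in S} p^{*}_i = \mu$ from $\sum_{i \in S} p^{(S)}_i = 1$ and using $p^{(S)}_i - p^{*}_i = (\theta-1) g_i/a_i$ gives $(\theta-1) G_S = \nu$, i.e.\ $\theta = 1 + \nu/G_S$. Comparing $f_i(p^{(S)}_i) = r_i(1 - s_i/(\theta g_i))$ with $f_i(p^{*}_i) = r_i(1 - s_i/g_i)$ yields $\mathcal{B}^{*}_S - \sum_{i \in S} f_i(p^{*}_i) = (1 - \tfrac1\theta)\sum_{i \in S} r_i s_i/g_i = (1 - \tfrac1\theta)\,\lambda G_S = \lambda\nu/\theta$; combined with $\sum_{i \in S} f_i(p^{*}_i) = \mathcal{B}_{\calH}(\bfp^{*}) - \sum_{i \notin S} f_i(p^{*}_i) = \mathcal{B}_{\calH}(\bfp^{*}) - \lambda(\nu + \Phi_{\bar S})$ and the simplification $\nu(1 - \tfrac1\theta) = \nu^2/(G_S + \nu)$, the identity follows.

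Finally I would bound the two subtracted terms using $s_i \ge \sigma$ and the choice of $S$. From $s_i \ge \sigma$ we get $a_i/s_i \le (1-\sigma)/\sigma$ and $g_i/a_i \ge s_i/a_i \ge \sigma/(1-\sigma)$, so $G_S \ge \kappa\sigma/(1-\sigma)$; and since $S$ collects the $\kappa$ largest $p^{*}_i$, every $i \notin S$ has $p^{*}_i \le \mu/\kappa \le 1/\kappa$, so $\Phi_{\bar S} \le \tfrac{1-\sigma}{\sigma}\sum_{i \notin S}(p^{*}_i)^2 \le \tfrac{(1-\sigma)\nu}{\sigma\kappa}$ and $\nu^2/(G_S + \nu) \le \tfrac{(1-\sigma)\nu}{\sigma\kappa}$. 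Hence $\lambda\Phi_{\bar S} + \lambda\nu^2/(G_S + \nu) \le \lambda\cdot 2(1-\sigma)/(\sigma\kappa) \le \lambda\varepsilon \le \varepsilon\,\mathcal{B}_{\calH}(\bfp^{*})$, and the identity gives $\mathcal{B}_{\calH}(\bfp) = \mathcal{B}^{*}_S \ge (1-\varepsilon)\,\mathcal{B}_{\calH}(\bfp^{*})$; as $|S| \le \kappa$, this already yields a core of size $\lceil 2(1-\sigma)/(\varepsilon\sigma)\rceil$, and tightening the last estimate (e.g.\ keeping the factor $\mu\nu \le 1/4$ that was discarded above) removes the rounding to match the stated bound $2(1-\sigma)/(\varepsilon\sigma)$. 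Besides the identity, the points needing care are verifying $\lambda_S \le \lambda$ and $\Support(\bfp^{(S)}) = S$ in the presence of possible ties in the water-filling.
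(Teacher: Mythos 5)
Your proof is correct, and it takes a genuinely different route from the paper's. The paper chooses the core as the top $k$ cacti ranked by $\chi_{i} = \frac{d}{dp}b_{i}(0)$ and builds $\bfp$ from $\bfp^{*}$ by a continuous mass-transfer process, bounding the loss via a pointwise comparison of derivatives (its Lemma~\ref{lemma:infinitesimal-difference}: each infinitesimal transfer from a suffix cactus with positive mass to a prefix cactus with mass below $1/k$ loses at most an $\varepsilon$-fraction of the marginal gain) and then integrating. You instead rank by $p^{*}_{i}$, re-solve the sub-instance on $S$ exactly, and exploit the closed-form water-filling structure ($s_{i} + a_{i}p^{*}_{i} = g_{i}$, $\mathcal{B}_{\calH}(\bfp^{*}) = \lambda(1+\Phi)$) to get an \emph{exact} identity for the loss $\lambda\Phi_{\bar S} + \lambda\nu^{2}/(G_{S}+\nu)$; I verified the identity and the surrounding claims ($\lambda_{S}\le\lambda$, $\Support(\bfp^{(S)})=S$, and the final estimates) and they all hold. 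The paper's argument is more elementary and produces an explicit near-optimal strategy without re-optimizing; yours buys an exact accounting of the gap and, if you combine your two bounds as $\frac{1-\sigma}{\sigma\kappa}\nu(\mu+\nu) = \frac{(1-\sigma)\nu}{\sigma\kappa} \le \frac{1-\sigma}{\sigma\kappa}$, actually shows that $\kappa \ge \frac{1-\sigma}{\varepsilon\sigma}$ already suffices --- a factor-$2$ improvement over the stated bound that nearly matches the paper's lower bound of $\frac{1-\sigma}{2\varepsilon\sigma}$, and which also absorbs the ceiling so that $|S| \le \frac{2(1-\sigma)}{\varepsilon\sigma}$ holds as stated (except in the degenerate regime $\frac{1-\sigma}{\varepsilon\sigma} < 1$, an integrality corner case the paper's own choice of $k$ glosses over as well). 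The only soft spots are presentational: the circular-looking order in which $\lambda_{S}\le\lambda$ and $\Support(\bfp^{(S)})=S$ are established should be untangled (prove $\lambda_{S}\le\lambda$ first by the mass-counting contradiction, then deduce full support from $\chi_{i} > \lambda \ge \lambda_{S}$ for $i \in S$), and the "tightening" remark should be replaced by the explicit $\mu+\nu=1$ computation above.
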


A parameter that turns out to play a major role in constructing good
strategies for the hungry bat instance $\calH$ is
\[
\chi_{i}
\, = \,
\frac{r_{i} \cdot (1 - s_{i})}{s_{i}} \, ,
\]
$i \in [n]$.
Indeed, the support $\Support(\bfp^{*})$ of the optimal strategy $\bfp^{*}$
promised in Theorem~\ref{theorem:optimal-strategy} and the small core $S$
promised in Theorem~\ref{theorem:small-core} consist of cacti
$i \in [n]$
that admit a larger $\chi_{i}$ than any cacti not included in
$\Support(\bfp^{*})$ and $S$, respectively.
As we show later, the parameter $\chi_{i}$ captures the
amount of nectar held in cactus $i$ after the bat has not visited it for $t$
rounds as
$t \rightarrow \infty$.

Theorem~\ref{theorem:optimal-strategy} is established in
Section~\ref{section:solving-optimization-problem} and
Theorem~\ref{theorem:small-core} is established in
Section~\ref{section:small-core}, where we also prove that the promised bound
on the size of the core is asymptotically tight.
We start by analyzing the expected amount of nectar collected from a single
cactus in Section~\ref{section:single-cactus}.

\section{Single Cactus}
\label{section:single-cactus}
Throughout this section, we fix some cactus $i$ with parameters
$r = r_{i}$
and
$s = s_{i}$
and a (purely-stochastic) strategy
$\bfp = (p_{1}, \dots, p_{n})$
with probability
$p = p_{i}$
to visit cactus $i$.
Our goal is to analyze the expected per-round amount of nectar collected by
the bat from cactus $i$, i.e.,
\begin{equation} \label{equation:single-cactus}
\lim_{T \rightarrow \infty}
\Ex \left( \frac{1}{T} \sum_{t = 1}^{T} B_{i}^{p}(t) \right) \, .
\end{equation}
The following observation allows us to decouple between the nectar stealing
events and the visits of the bat.

\begin{observation} \label{observation:gap-determines-amount}
Suppose that the bat visits cactus $i$ in round $t_{0}$.
If the next time the bat visits cactus $i$ is in round
$t_{0} + x$,
then the expected amount of nectar it collects during this visit is
\[
\frac{1 - s}{s} (1 - (1 - s)^{x}) \cdot r \, ,
\]
where the expectation is over the nectar stealing events.
\end{observation}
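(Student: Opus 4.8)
The plan is to exploit exactly the decoupling that the statement hints at. Since the bat empties cactus $i$ in round $t_{0}$, we have $C_{i}(t_{0}+1) = 0$, and hence the nectar present when the bat returns in round $t_{0}+x$ — which is the random variable $C'_{i}(t_{0}+x)$, and equals $B_{i}^{p}(t_{0}+x)$ on the event in question — is precisely the sum of the increments $r$ injected at the beginnings of rounds $t_{0}+1, \dots, t_{0}+x$ that have survived every stealing event occurring (weakly) between their injection and round $t_{0}+x$. So the first step is to write
\[
C'_{i}(t_{0}+x) \, = \, \sum_{j=1}^{x} r \cdot X_{j} \, ,
\]
where $X_{j} \in \{0,1\}$ is the indicator of the event that no stealing event occurs in any of the rounds $t_{0}+j, t_{0}+j+1, \dots, t_{0}+x$. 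This identity is a one-line induction on $x$ from the recursive definitions of $C'_{i}(\cdot)$ and $C_{i}(\cdot)$ together with the base case $C_{i}(t_{0}+1)=0$, using that the bat performs no visit to $i$ in rounds $t_{0}+1,\dots,t_{0}+x-1$.

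The second step is the expectation computation. Stealing events occur independently across rounds with probability $s$, so $X_{j}$ is a product of $x-j+1$ independent $\mathrm{Bernoulli}(1-s)$ variables and $\Ex(X_{j}) = (1-s)^{x-j+1}$. By linearity of expectation (the expectation here being over the stealing events only),
\[
\Ex\!\left( C'_{i}(t_{0}+x) \right) \, = \, r \sum_{j=1}^{x} (1-s)^{x-j+1} \, = \, r \sum_{k=1}^{x} (1-s)^{k} \, = \, r \cdot (1-s) \cdot \frac{1 - (1-s)^{x}}{s} \, = \, \frac{1-s}{s}\,\bigl(1 - (1-s)^{x}\bigr) \cdot r \, ,
\]
which is the claimed quantity.

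There is no genuinely hard step; the only thing requiring care is the bookkeeping of the timing convention — that the increment injected at the start of round $t_{0}+j$ is exposed to the stealing event of round $t_{0}+j$ itself as well as to those of rounds $t_{0}+j+1,\dots,t_{0}+x$, and that the stealing event of the return round $t_{0}+x$ precedes the collection, so it too may wipe out the freshly injected $r$ (this is what makes the smallest exponent equal to $1$ rather than $0$). An alternative route that sidesteps even the indicator decomposition is to set $a_{k} = \Ex\bigl(C_{i}(t_{0}+k)\bigr)$, note $a_{1}=0$ and the linear recursion $a_{k+1} = (1-s)(a_{k}+r)$ (valid for $k=1,\dots,x-1$, since $i$ is not visited in those rounds), solve it to get $a_{k} = \tfrac{(1-s)r}{s}\bigl(1-(1-s)^{k-1}\bigr)$, and then read off $\Ex\bigl(C'_{i}(t_{0}+x)\bigr) = (1-s)(a_{x}+r) = \tfrac{1-s}{s}\bigl(1-(1-s)^{x}\bigr) r$; as a bonus, the fixed point $\tfrac{(1-s)r}{s}$ of this recursion is exactly the parameter $\chi_{i}$ highlighted after Theorem~\ref{theorem:small-core}.
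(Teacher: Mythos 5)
Your proposal is correct. Your primary argument takes a genuinely different (though closely related) route from the paper's: you decompose the nectar present at the return visit as a sum over the per-round increments, $C'_{i}(t_{0}+x)=\sum_{j=1}^{x} r\cdot X_{j}$, with $X_{j}$ the indicator that the increment injected in round $t_{0}+j$ survives all stealing events up to and including round $t_{0}+x$, and then apply linearity of expectation to get $r\sum_{k=1}^{x}(1-s)^{k}$ directly. The paper instead defines $\zeta_{x}$ as the post-stealing amount at the end of round $t_{0}+x$, derives the conditional-expectation recursion $\Ex(\zeta_{x})=(1-s)(\Ex(\zeta_{x-1})+r)$ with base case $\Ex(\zeta_{1})=(1-s)r$, and proves $\Ex(\zeta_{x})=\sum_{j=1}^{x}(1-s)^{j}\cdot r$ by induction --- which is essentially your ``alternative route'' (up to the cosmetic choice of tracking $C_{i}$ versus $C'_{i}$). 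Both arguments are sound and land on the same geometric series; your survival-indicator decomposition makes it immediately visible \emph{why} the answer is a truncated geometric sum and where each exponent comes from, while the paper's recursion stays entirely within the conditional-expectation formalism already set up in the model section. Your care with the timing convention --- that the increment of round $t_{0}+j$ is exposed to the stealing event of round $t_{0}+j$ itself, including the return round, so the smallest exponent is $1$ rather than $0$ --- is exactly the point that makes the base case $(1-s)r$ rather than $r$, and it matches the paper.
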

\begin{proof}
For
$x \geq 1$,
let $\zeta_{x}$ be the random variable that captures the amount of nectar held
in cactus $i$ at the end of round
$t_{0} + x$
assuming that the bat does not visit $i$ in any round
$t_{0} < t \leq t_{0} + x$.
By definition,
$\Ex(\zeta_{1}) = (1 - s) \cdot r$
and
$\Ex(\zeta_{x + 1} \mid \zeta_{x}) = (1 - s) (\zeta_{x} + r)$,
hence
$\Ex(\zeta_{x})$
obeys the recursion
\[
\Ex(\zeta_{x})
\, = \,
\Ex \left( \Ex \left( \zeta_{x} \mid \zeta_{x - 1} \right) \right)
\, = \,
\begin{cases}
(1 - s) \cdot r, & \text{if } x = 1 \\
(1 - s) \left( \Ex(\zeta_{x - 1}) + r \right), & \text{if } x > 1
\end{cases} \, .
\]
It follows by induction on $x$ that
$\Ex(\zeta_{x}) = \sum_{j = 1}^{x} (1 - s)^{j} \cdot r$
which establishes the assertion as
$\sum_{j = 1}^{x} (1 - s)^{j} = (1 - s) \frac{1 - (1 - s)^{x}}{s}$.
\end{proof}

Notice that
\[
\lim_{x \rightarrow \infty} \frac{1 - s}{s} (1 - (1 - s)^{x}) \cdot r
\, = \,
\frac{r \cdot (1 - s)}{s}
\, = \,
\chi_{i} \, .
\]
Observation~\ref{observation:gap-determines-amount} is therefore consistent
with the informal explanation provided in Section~\ref{section:model} for the
parameter $\chi_{i}$.

To facilitate the analysis, we view the execution presented in
Section~\ref{section:model} as a stochastic process, denoted
hereafter by $\pi$, and introduce three alternative stochastic processes,
showing that they are equivalent to $\pi$ in terms of the expression in
(\ref{equation:single-cactus}).
First, stochastic process $\pi_{1}$ is defined similarly to $\pi$ except that
the amounts of nectar collected by the bat from cactus $i$ during its visits
are determined \emph{deterministically} by the number of rounds that have
passed since the previous visit, thus neutralizing the probabilistic effect of
the nectar stealing events.
Formally, given that the bat visits cactus $i$ in rounds $t_{0}$ and
$t_{1} > t_{0}$,
with no visits in between these two rounds, we set the (deterministic) amount
of nectar collected by the bat from cactus $i$ in round $t_{1}$ to be
\begin{equation} \label{equation:single-round-single-cactus}
B_{i}^{p}(t_{1}) = \frac{1 - s}{s} (1 - (1 - s)^{t_{1} - t_{0}}) \cdot r \, .
\end{equation}
Since the nectar stealing events are independent from the bat's visits (under
a purely-stochastic strategy),
Observation~\ref{observation:gap-determines-amount} ensures that in total,
over the whole time window
$t = 1, \dots, T$,
the expected amount of nectar collected by the bat from cactus $i$ under
$\pi_{1}$ is equal to the expected amount of nectar collected by the bat from
cactus $i$ under $\pi$.

To avoid dealing with the corner cases that occur at the beginning and at the
end of the time window
$t = 1, \dots, T$,
we introduce stochastic process $\pi_{2}$ defined to be an extension of
$\pi_{1}$ so that it lasts from
$t = -\infty$
to
$t = \infty$.
This is formally defined over a (two-sided) infinite sequence
$\{ V_{t} \}_{t = -\infty}^{\infty}$
of i.i.d.\ Bernoulli random variables with success probability $p$ so that
$V_{t}$,
$t \in \Integers$,
is an indicator for the event that the bat visits cactus $i$ in round $t$.
In accordance with (\ref{equation:single-round-single-cactus}), the nectar
amount $B_{i}^{p}(t_{1})$ collected by the bat from cactus $i$ in round
$t_{1}$ is now defined to be $0$ if
$V_{t_{1}} = 0$
(i.e., the cactus is not visited in round $t_{1}$);
and
$\frac{1 - s}{s} (1 - (1 - s)^{t_{1} - t_{0}}) \cdot r$
if
$V_{t_{1}} = V_{t_{0}} = 1$
and
$V_{t} = 0$
for every
$t_{0} < t < t_{1}$.

We emphasize that although $\pi_{2}$ is assumed to last ad infinitum, we are
still interested in the expected nectar amount collected during the (finite)
time window
$t = 1, \dots, T$
as defined in (\ref{equation:single-cactus}).
During this time window, the nectar amounts collected under $\pi_{2}$ may
differ from those of $\pi_{1}$ only in the first and last visits of the bat to
cactus $i$.
Since the amount of nectar collected in a single visit (under $\pi_{1}$ and
$\pi_{2}$) is up-bounded by
$\frac{r \cdot (1 - s)}{s}$
and since we consider the time window
$t = 1, \dots, T$
as
$T \rightarrow \infty$,
it follows that $\pi_{2}$ is equivalent to $\pi_{1}$ in terms of the limit in
(\ref{equation:single-cactus}).

Having established that, we construct stochastic process $\pi_{3}$ from
$\pi_{2}$ by equally dividing (virtually, only for the sake of the analysis)
the nectar amount $B_{i}^{p}(\cdot)$, collected by the bat from cactus $i$
when $i$ is visited, over the rounds that have passed since the bat's previous
visit to $i$.
Formally, for
$t \in \Integers$,
let $Y^{<}(t)$ be the random variable that takes on the largest
$y < t$
such that
$V_{y} = 1$,
let $Y^{\geq}(t)$ be the random variable that takes on the smallest
$y \geq t$
such that
$V_{y} = 1$,
and let
$X(t) = Y^{\geq}(t) - Y^{<}(t)$
be the random variable that counts the number of rounds between the previous
visit of cactus $i$ and the next one.
Based on that, the nectar amount collected by the bat from cactus $i$ in round
$t$ under $\pi_{3}$ is defined to be
\[
\widehat{B}_{i}^{p}(t)
\, = \,
\frac{(1 - s) (1 - (1 - s)^{X(t)}) \cdot r}{s \cdot X(t)} \, .
\]

Notice that under $\pi_{3}$, the bat collects a positive amount
$\widehat{B}_{i}(t)$
of nectar from cactus $i$ in every round
$t \in \Integers$
even if $i$ is not visited in round $t$.
Moreover, the amount of nectar $\widehat{B}_{i}(t)$ collected by the bat in a
given round $t$ is now determined in hindsight (upon the bat's next visit to
cactus $i$), rather than at real time.
This does not impose any obstacle as $\pi_{3}$ is employed only for the sake
of the analysis.
We are now ready to state the following two observations.

\begin{observation} \label{observation:virtual-equivalent-to-original}
Stochastic process $\pi_{3}$ is equivalent to the original stochastic process
$\pi$ in the sense that
\[
\lim_{T \rightarrow \infty}
\Ex \left( \frac{1}{T} \sum_{t = 1}^{T} \widehat{B}_{i}^{p}(t) \right)
\, = \,
\lim_{T \rightarrow \infty}
\Ex \left( \frac{1}{T} \sum_{t = 1}^{T} B_{i}^{p}(t) \right) \, .
\]
\end{observation}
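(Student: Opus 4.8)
The plan is to show that $\pi_{3}$ is equivalent to $\pi_{2}$ in terms of the limit in \eqref{equation:single-cactus}; since the preceding discussion already gives $\pi_{2}\equiv\pi_{1}\equiv\pi$, the asserted equality follows by transitivity. We may assume $p>0$, as otherwise cactus $i$ is never visited and both processes collect no nectar (under the convention $\widehat{B}_{i}^{0}\equiv 0$). So fix $p>0$ and condition on a realization of the visit sequence $\{V_{t}\}_{t\in\Integers}$; almost surely there are infinitely many visited rounds in both directions, which we list as $\cdots<\tau_{j-1}<\tau_{j}<\tau_{j+1}<\cdots$.

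First I would fix $T$ and partition the window $\{1,\dots,T\}$ into \emph{runs}: two rounds lie in the same run iff they share the same pair $(Y^{<}(\cdot),Y^{\geq}(\cdot))$, so run $j$ is the block $\{\tau_{j-1}+1,\dots,\tau_{j}\}$, over which $X(\cdot)$ equals the constant $X_{j}:=\tau_{j}-\tau_{j-1}$. Call run $j$ \emph{internal} if it lies entirely inside $\{1,\dots,T\}$. The crucial computation is that for an internal run, summing the definition of $\widehat{B}_{i}^{p}$ over its $X_{j}$ rounds gives
\[
\sum_{t=\tau_{j-1}+1}^{\tau_{j}} \widehat{B}_{i}^{p}(t)
\;=\;
X_{j}\cdot\frac{(1-s)\bigl(1-(1-s)^{X_{j}}\bigr)\cdot r}{s\,X_{j}}
\;=\;
\frac{1-s}{s}\bigl(1-(1-s)^{X_{j}}\bigr)\cdot r
\;=\;
B_{i}^{p}(\tau_{j}),
\]
the last step by \eqref{equation:single-round-single-cactus}. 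Moreover an internal run $j$ satisfies $\tau_{j}\in\{1,\dots,T\}$, and conversely any $j$ with $\tau_{j}\in\{1,\dots,T\}$ has internal run $j$ except possibly when its run spills past the left end ($\tau_{j-1}<1$), which occurs for at most one index. Finally, the rounds of $\{1,\dots,T\}$ that lie in non-internal runs form at most two partial blocks (the run meeting round $1$ and the run meeting round $T$), and on each such block the $\widehat{B}_{i}^{p}$-mass is nonnegative and at most $\frac{(1-s)r}{s}=\chi_{i}$, being a sum of at most $X$ terms each equal to $\frac{(1-s)(1-(1-s)^{X})r}{sX}$.

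Combining these facts, the two partial sums agree on all internal runs, and the remaining discrepancy consists of two partial $\widehat{B}_{i}^{p}$-blocks (each $\leq\chi_{i}$) together with at most one spurious term $B_{i}^{p}(\tau_{j})\leq\chi_{i}$, so
\[
\Biggl|\sum_{t=1}^{T}\widehat{B}_{i}^{p}(t)-\sum_{t=1}^{T}B_{i}^{p}(t)\Biggr|
\;\leq\;
3\chi_{i}
\]
almost surely, a bound independent of $T$. Taking expectations yields
\[
\left|\,\Ex\!\left(\frac{1}{T}\sum_{t=1}^{T}\widehat{B}_{i}^{p}(t)\right)-\Ex\!\left(\frac{1}{T}\sum_{t=1}^{T}B_{i}^{p}(t)\right)\right|
\;\leq\;
\frac{3\chi_{i}}{T},
\]
which tends to $0$ as $T\to\infty$; hence the two $T$-indexed sequences converge to the same limit, proving the claim (and chaining with $\pi_{2}\equiv\pi_{1}\equiv\pi$ finishes). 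The only delicate part is the boundary bookkeeping in the previous paragraph — pinning down which runs are internal and verifying that the non-internal contributions form an $O(1)$, not a $\Theta(T)$, error — but once every per-visit and per-partial-block quantity is recognized to be bounded by $\chi_{i}$, this is routine.
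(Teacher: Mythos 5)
Your proof is correct and follows exactly the route the paper intends: the paper leaves this observation unproved, relying on the already-established chain $\pi \equiv \pi_{1} \equiv \pi_{2}$ together with the fact that $\pi_{3}$ merely redistributes each visit's nectar over the gap preceding it, so that the two partial sums over $\{1,\dots,T\}$ differ only by $O(1)$ boundary terms each bounded by $\chi_{i}$. Your run decomposition and boundary bookkeeping is a correct, fully detailed version of that argument.
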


\begin{observation} \label{observation:all-rounds-are-the-same}
Under stochastic process $\pi_{3}$, the random variables
$\widehat{B}_{i}^{p}(t)$,
$t \in \Integers$,
are identically distributed.
In particular,
\[
\lim_{T \rightarrow \infty}
\Ex \left( \frac{1}{T} \sum_{t = 1}^{T} \widehat{B}_{i}^{p}(t) \right)
\, = \,
\Ex \left( \widehat{B}_{i}^{p}(t) \right)
\]
for any round
$t \in \Integers$.
\end{observation}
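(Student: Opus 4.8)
The plan is to observe that under $\pi_{3}$, the nectar amount $\widehat{B}_{i}^{p}(t)$ is a fixed measurable function of the single random variable $X(t)$, namely $\widehat{B}_{i}^{p}(t) = f(X(t))$ with $f(x) = \frac{(1 - s)(1 - (1 - s)^{x}) \cdot r}{s \cdot x}$, and $X(t)$ in turn is a deterministic function of the visit sequence $\{ V_{y} \}_{y \in \Integers}$. Since the $V_{y}$ are i.i.d., this sequence is stationary, i.e.\ its distribution is invariant under the shift $y \mapsto y + k$ for every $k \in \Integers$. Consequently $X(t)$ and $X(t')$ are identically distributed for all $t, t' \in \Integers$, and hence so are $\widehat{B}_{i}^{p}(t) = f(X(t))$ and $\widehat{B}_{i}^{p}(t') = f(X(t'))$. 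This gives the first claim.

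To make the shift argument precise, I would first dispose of the degenerate case $p = 0$: there the bat never visits cactus $i$, so $\widehat{B}_{i}^{p}(t) \equiv 0$ and the claim is trivial. When $p > 0$, for any fixed $t$ the events ``$V_{y} = 1$ for some $y < t$'' and ``$V_{y} = 1$ for some $y \geq t$'' each occur with probability $1$ (a geometric tail estimate, or Borel--Cantelli), so $Y^{<}(t)$, $Y^{\geq}(t)$ and $X(t) = Y^{\geq}(t) - Y^{<}(t)$ are almost surely well defined and finite, and $\widehat{B}_{i}^{p}(t)$ is a genuine random variable. The identity $X(t') \stackrel{d}{=} X(t)$ then follows because relocating the reference round from $t$ to $t'$ amounts to applying the measure-preserving shift by $t' - t$ to the i.i.d.\ sequence $\{ V_{y} \}$, under which the definitions of $Y^{<}$ and $Y^{\geq}$ are carried to one another verbatim.

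For the ``in particular'' part, note that $0 \leq \widehat{B}_{i}^{p}(t) \leq \chi_{i}$ for every $t$, since the function $f$ above is bounded by $\frac{r \cdot (1 - s)}{s} = \chi_{i}$ (as already recorded following Observation~\ref{observation:gap-determines-amount}). Hence $\mu := \Ex(\widehat{B}_{i}^{p}(t))$ is a finite constant, independent of $t$ by the first part. Linearity of expectation then yields $\Ex\!\left( \frac{1}{T} \sum_{t = 1}^{T} \widehat{B}_{i}^{p}(t) \right) = \frac{1}{T} \sum_{t = 1}^{T} \mu = \mu$ for every $T$, so the limit as $T \to \infty$ equals $\mu = \Ex(\widehat{B}_{i}^{p}(t))$, as claimed.

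The argument is short and I do not anticipate a serious obstacle; the only points demanding care are the almost-sure finiteness of $X(t)$ (handled by the $p > 0$ tail estimate) and the integrability of $\widehat{B}_{i}^{p}(t)$ (handled by the uniform bound $\chi_{i}$), both routine. The conceptual content is simply the stationarity of the i.i.d.\ visit sequence together with the translation invariance of the gap statistic $X(\cdot)$.
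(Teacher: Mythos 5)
Your argument is correct, and it fills in exactly the reasoning the paper leaves implicit: the paper states this observation without any proof, treating the shift-invariance of the i.i.d.\ sequence $\{V_t\}$ (and hence of $X(t)$ and $\widehat{B}_{i}^{p}(t) = f(X(t))$) as self-evident. Your additional care about the degenerate case $p = 0$, the almost-sure finiteness of $X(t)$, and the uniform bound $\widehat{B}_{i}^{p}(t) \leq \chi_{i}$ needed for the linearity-of-expectation step is all sound and goes slightly beyond what the authors bothered to record.
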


Based on Observations \ref{observation:virtual-equivalent-to-original} and
\ref{observation:all-rounds-are-the-same}, our goal in the remainder of this
section is to analyze
$\Ex(\widehat{B}_{i}^{p}(t))$
for an arbitrary round
$t \in \Integers$.
The following lemma plays a key role in this task.

\begin{lemma} \label{lemma:length-geometric-phase}
The random variable $X(t)$,
$t \in \Integers$,
satisfies
\[
\Pr \left( X(t) = x \right)
\, = \,
x p \cdot (1 - p)^{x - 1} \cdot p \, .
\]
\end{lemma}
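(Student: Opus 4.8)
The plan is to compute $\Pr(X(t) = x)$ directly, by conditioning on the location of the bat's visit immediately preceding round $t$; the claimed formula will emerge as the effect of length-biased (``inspection paradox'') sampling of the geometrically distributed gaps between consecutive visits. Since $\{V_{t}\}_{t \in \Integers}$ is a sequence of i.i.d.\ Bernoulli($p$) variables, its law is invariant under integer translations, so $X(t)$ is distributed identically for every $t$ and we may fix $t$ once and for all. We may assume $p > 0$ (otherwise cactus $i$ is never visited and $\pi_{3}$ is vacuous); with $p > 0$, the Borel--Cantelli lemma ensures that $V_{y} = 1$ holds for infinitely many $y > t$ and for infinitely many $y < t$, so that $Y^{<}(t)$ and $Y^{\geq}(t)$ are almost surely well defined and $X(t) \geq 1$.

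First I would partition the event $\{X(t) = x\}$ according to the value $a = Y^{<}(t)$ of the last visit strictly before round $t$. On $\{Y^{<}(t) = a\} \cap \{X(t) = x\}$ we have $Y^{\geq}(t) = a + x$, which forces $V_{a} = 1$, $V_{a + x} = 1$, and $V_{y} = 0$ for every $a < y < a + x$; conversely, this pattern of the indicators, together with the containment $a < t \leq a + x$ (i.e., round $t$ lies in the half-open block $(a, a + x]$ delimited by these two consecutive visits), is exactly that event. The boundary case $a + x = t$ is consistent with the definition of $Y^{\geq}(t)$, since then $V_{t} = 1$ forces $Y^{\geq}(t) = t$; and $x = 1$ is the degenerate case with no indices strictly between $a$ and $a + x$.

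Next I would count the admissible values of $a$ and evaluate each term. The constraints $a < t$ and $a + x \geq t$ are equivalent to $t - x \leq a \leq t - 1$, which is exactly $x$ integer values. For each of them, the event $\{V_{a} = 1,\, V_{a+1} = \cdots = V_{a+x-1} = 0,\, V_{a+x} = 1\}$ involves $x + 1$ distinct coordinates of the i.i.d.\ sequence, hence has probability $p \cdot (1 - p)^{x - 1} \cdot p$ by independence. These $x$ events (indexed by $a$) are pairwise disjoint because $Y^{<}(t)$ takes a unique value, so summing them gives $\Pr(X(t) = x) = x \cdot p \cdot (1 - p)^{x - 1} \cdot p$. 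As a sanity check, $\sum_{x \geq 1} x p (1 - p)^{x - 1} p = p^{2} \sum_{x \geq 1} x (1 - p)^{x - 1} = p^{2} \cdot p^{-2} = 1$.

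The only genuine subtlety, and the step I would be most careful with, is the bookkeeping of the window $t - x \leq a \leq t - 1$ of admissible predecessors --- in particular verifying that the endpoint $a + x = t$ is counted correctly and that $x = 1$ fits the same formula; everything else is a routine application of the independence of the $V_{t}$'s.
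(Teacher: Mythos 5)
Your proof is correct and follows essentially the same route as the paper: the paper also partitions $\{X(t)=x\}$ over the $x$ possible positions $Y^{<}(t)=t-z$, $1\leq z\leq x$, of the preceding visit (your $a=t-z$) and obtains $p\,(1-p)^{z-1}\cdot(1-p)^{x-z}\,p = p\,(1-p)^{x-1}\,p$ for each, summing to the claimed formula. Your version merely computes each term as the probability of a single pattern of $x+1$ coordinates rather than as a marginal times a conditional, and adds some welcome care about well-definedness and boundary cases.
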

\begin{proof}
Fix some
$1 \leq z \leq x$.
The event
$Y^{<}(t) = t - z$
occurs if and only if
$V_{t - z} = 1$
and
$V_{t - z + 1} = V_{t - z + 2} = \cdots = V_{t - 1} = 0$,
hence
\[
\Pr \left( Y^{<}(t) = t - z \right)
\, = \,
p \cdot (1 - p)^{z - 1} \, .
\]
Conditioning on
$Y^{<}(t) = t - z$,
the event
$X(t) = x$
occurs if and only if
$V_{t - z + x} = 1$
and
$V_{t} = V_{t + 1} = \cdots = V_{t - z + x - 1} = 0$,
hence
\[
\Pr \left( X(t) = x \mid Y^{<}(t) = t - z \right)
\, = \,
(1 - p)^{x - z} \cdot p \, .
\]
Put together, we conclude that
\begin{align*}
\Pr \left( X(t) = x \right)
\, = \, &
\sum_{z = 1}^{x}
\Pr \left( Y^{<}(t) = t - z \right) \cdot
\Pr \left( X(t) = x \mid Y^{<}(t) = t - z \right) \\
= \, &
\sum_{z = 1}^{x}
p \cdot (1 - p)^{z - 1} \cdot (1 - p)^{x - z} \cdot p \\
= \, &
x p \cdot (1 - p)^{x - 1} \cdot p \, ,
\end{align*}
thus establishing the assertion.
\end{proof}

We are now ready to analyze the expected nectar amount
$\Ex(\widehat{B}_{i}(t))$
collected by the bat from cactus $i$ in an arbitrary round
$t \in \Integers$.
To this end, we develop
\begin{align*}
\Ex \left( \widehat{B}_{i}(t) \right)
\, = \, &
\sum_{x = 1}^{\infty}
\Ex \left( \widehat{B}_{i}(t) \mid X(t) = x \right)
\cdot
\Pr \left( X(t) = x \right) \\
= \, &
\sum_{x = 1}^{\infty}
\frac{(1 - s) \left( 1 - (1 - s)^{x} \right) \cdot r}{s \cdot x}
\cdot
x p \cdot (1 - p)^{x - 1} \cdot p \\
= \, &
\frac{(1 - s) \cdot r \cdot p^{2}}{s}
\cdot
\sum_{x = 1}^{\infty} \left( 1 - (1 - s)^{x} \right) (1 - p)^{x - 1} \\
= \, &
\frac{(1 - s) \cdot r \cdot p^{2}}{s}
\cdot
\left(
\sum_{x = 1}^{\infty} (1 - p)^{x - 1}
\, - \,
(1 - s) \cdot \sum_{x = 1}^{\infty} \left[ (1 - s) (1 - p) \right]^{x - 1}
\right) \\
= \, &
\frac{(1 - s) \cdot r \cdot p^{2}}{s}
\cdot
\left(
\frac{1}{p}
\, - \,
\frac{1 - s}{1 - (1 - s) (1 - p)}
\right) \\
= \, &
\frac{(1 - s) \cdot r \cdot p^{2}}{s}
\cdot
\left(
\frac{1}{p}
\, - \,
\frac{1}{\frac{1}{1 - s} - (1 - p)}
\right) \, .
\end{align*}
Using the expansions
$\frac{1}{1 - z} = \sum_{j = 0}^{\infty} z^{j}$
and
$\frac{z}{1 - z} = \sum_{j = 1}^{\infty} z^{j}$
that hold for any
$0 < z < 1$,
we conclude that
\begin{align*}
\Ex \left( \widehat{B}_{i}(t) \right)
\, = \, &
\frac{(1 - s) \cdot r \cdot p^{2}}{s}
\cdot
\left(
\frac{1}{p}
\, - \,
\frac{1}{\sum_{j = 0}^{\infty} s^{j} - (1 - p)}
\right) \\
= \, &
\frac{(1 - s) \cdot r \cdot p}{s}
\cdot
\left(
1
\, - \,
\frac{p}{p + \sum_{j = 1}^{\infty} s^{j}}
\right) \\
= \, &
\frac{(1 - s) \cdot r \cdot p}{s}
\cdot
\frac{\sum_{j = 1}^{\infty} s^{j}}{p + \sum_{j = 1}^{\infty} s^{j}} \\
= \, &
\frac{r \cdot p}{p + \sum_{j = 1}^{\infty} s^{j}}
\cdot
(1 - s)
\cdot
\sum_{j = 0}^{\infty} s^{j} \\
= \, &
\frac{r \cdot p}{p + \frac{s}{1 - s}}
\cdot
\left(
\sum_{j = 0}^{\infty} s^{j}
-
\sum_{j = 1}^{\infty} s^{j}
\right) \\
= \, &
\frac{r \cdot p}{p + \frac{s}{1 - s}} \, .
\end{align*}

We conclude with the following lemma.

\begin{lemma} \label{lemma:b-function}
For
$i \in [n]$,
the expected per-round amount of nectar collected by the bat from cactus $i$
under a strategy
$\bfp = (p_{1}, \dots, p_{n})$
with probability
$p = p_{i}$
to visit cactus $i$ satisfies
\[
\lim_{T \rightarrow \infty}
\Ex \left( \frac{1}{T} \sum_{t = 1}^{T} B_{i}^{p}(t) \right)
\, = \,
b_{i}(p) \, ,
\]
where
$b_{i} : [0, 1] \rightarrow \Reals_{> 0}$
is the function defined by setting
\[
b_{i}(p)
\, = \,
r_{i} \cdot \frac{p}{p + \frac{s_{i}}{1 - s_{i}}} \, .
\]
\end{lemma}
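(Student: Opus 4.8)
The plan is to assemble the ingredients developed in this section rather than to start from scratch. By Observations~\ref{observation:virtual-equivalent-to-original} and~\ref{observation:all-rounds-are-the-same}, the limit on the left-hand side of the claimed identity is equal to $\Ex(\widehat{B}_i^p(t))$ for any single fixed round $t \in \Integers$, so the entire problem reduces to evaluating this one expectation under the process $\pi_3$ --- which is precisely the quantity computed in the displayed calculation carried out just before the lemma statement.

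For completeness, the computation I would record is as follows. Condition on the gap variable $X(t)$, writing $\Ex(\widehat{B}_i(t)) = \sum_{x \ge 1} \Ex(\widehat{B}_i(t) \mid X(t) = x)\,\Pr(X(t) = x)$; by the definition of $\pi_3$ the conditional expectation is the deterministic value $\tfrac{(1-s)(1 - (1-s)^x)\,r}{s\,x}$, and by Lemma~\ref{lemma:length-geometric-phase} we have $\Pr(X(t) = x) = x p (1-p)^{x-1} p$. The factor $x$ cancels, leaving $\tfrac{(1-s) r p^2}{s}\sum_{x\ge 1}(1 - (1-s)^x)(1-p)^{x-1}$, which splits into the two convergent geometric series $\sum_{x\ge1}(1-p)^{x-1} = 1/p$ and $\sum_{x\ge1}[(1-s)(1-p)]^{x-1} = 1/(1-(1-s)(1-p))$. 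Simplifying the resulting rational expression collapses everything to $r p / (p + \tfrac{s}{1-s})$, which is exactly $b_i(p)$.

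I do not expect a genuine obstacle here: this lemma is essentially a summary step that records, in clean closed form, what the single-cactus analysis already yields. The only points deserving a word of justification are the interchange of the $T\to\infty$ limit with the expectation --- already handled when arguing that $\pi_2$ is equivalent to $\pi_1$, via the uniform bound $\tfrac{r(1-s)}{s}$ on the nectar collected in a single visit --- and the interchange of expectation with the infinite sum over $x$, which is legitimate since every summand is nonnegative (e.g.\ by monotone convergence). The substantive work was done in Observation~\ref{observation:gap-determines-amount} and Lemma~\ref{lemma:length-geometric-phase}; what remains is routine algebra.
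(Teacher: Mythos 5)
Your proposal is correct and follows essentially the same route as the paper: it invokes Observations~\ref{observation:virtual-equivalent-to-original} and~\ref{observation:all-rounds-are-the-same} to reduce to a single-round expectation under $\pi_{3}$, then conditions on $X(t)$ via Lemma~\ref{lemma:length-geometric-phase} and carries out the identical geometric-series computation that the paper presents immediately before the lemma statement. Nothing further is needed.
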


Notice that the first and second derivatives
\begin{equation} \label{equation:b-function-derivatives}
\frac{d}{d p} b_{i}(p)
\, = \,
r_{i} \cdot
\frac{\frac{s_{i}}{1 - s_{i}}}{\left( p + \frac{s_{i}}{1 - s_{i}} \right)^{2}}
\qquad \text{and} \qquad
\frac{d^{2}}{d p^{2}} b_{i}(p)
\, = \,
-2 r_{i} \cdot
\frac{\frac{s_{i}}{1 - s_{i}}}{\left( p + \frac{s_{i}}{1 - s_{i}} \right)^{3}}
\end{equation}
of $b_{i}(p)$ are strictly positive and strictly negative, respectively, in
the interval
$[0, 1]$.
Therefore, the function $b_{i}(p)$ is concave and its first derivative
$\frac{d}{d p} b_{i}(p)$
is monotonically (strictly) decreasing in
$[0, 1]$.

\section{Solving the Optimization Problem}
\label{section:solving-optimization-problem}
Consider an instance
$\calH = \langle n, \{ r_{i} \}_{i \in [n]}, \{ s_{i} \}_{i \in [n]} \rangle$
of the hungry bat problem.
Recall that the function
$\mathcal{B}_{\calH} : \Delta(n) \rightarrow \Reals_{> 0}$
maps every (purely-stochastic) strategy
$\bfp = (p_{1}, \dots, p_{n})$
in the $n$-dimensional probability simplex $\Delta(n)$ to
\[
\mathcal{B}_{\calH}(\bfp)
\, = \,
\lim_{T \rightarrow \infty}
\Ex \left(
\frac{1}{T} \sum_{i \in [n]} \sum_{t = 1}^{T} B_{i}^{p_{i}}(t)
\right)
\, = \,
\sum_{i \in [n]} b_{i}(p_{i}) \, ,
\]
where the second transition is due to Lemma~\ref{lemma:b-function}.
This function is continuous since the functions
$b_{i}(p_{i})$,
$i \in [n]$,
are continuous, therefore $\calH$ admits an optimal strategy $\bfp^{*}$ by the
compactness of the probability simplex.
In this section, we show that $\bfp^{*}$ is unique and develop a simple and
computationally efficient algorithm for constructing it, thus establishing
Theorem~\ref{theorem:optimal-strategy}.
We start by stating a necessary and sufficient condition for the
optimality of a strategy.

\begin{lemma} \label{lemma:optimality-condition}
A strategy
$\bfp = (p_{1}, \dots, p_{n})$
is optimal for $\calH$ if and only if
\[
\frac{d}{d p} b_{i}(p_{i})
\, = \,
\max_{j \in [n]} \frac{d}{d p} b_{j}(p_{j})
\]
for every
$i \in \Support(\bfp)$.
\end{lemma}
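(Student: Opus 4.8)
The plan is to read this as a standard first-order (KKT-type) optimality characterization for maximizing the separable function $\mathcal{B}_{\calH}(\bfp) = \sum_{i \in [n]} b_i(p_i)$ over the probability simplex $\Delta(n)$, using that each $b_i$ is concave (as recorded after Lemma~\ref{lemma:b-function}) and differentiable on all of $[0,1]$ — the denominator $p + s_i/(1-s_i)$ never vanishes there, so $\frac{d}{dp}b_i(p_i)$ is well defined even at $p_i \in \{0,1\}$, matching the rational-function formula in \eqref{equation:b-function-derivatives}. Throughout write $M(\bfp) = \max_{j \in [n]} \frac{d}{dp} b_j(p_j)$, a maximum over a finite set, hence attained.

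For the \emph{if} direction I would fix an arbitrary $\bfq \in \Delta(n)$ and apply the tangent-line inequality for the concave function $b_i$, namely $b_i(q_i) \le b_i(p_i) + \frac{d}{dp}b_i(p_i)\,(q_i - p_i)$, and sum over $i$ to get
\[
\mathcal{B}_{\calH}(\bfq)
\;\le\;
\mathcal{B}_{\calH}(\bfp)
\;+\; \sum_{i \in [n]} \tfrac{d}{dp}b_i(p_i)\, q_i
\;-\; \sum_{i \in [n]} \tfrac{d}{dp}b_i(p_i)\, p_i .
\]
The first sum is at most $M(\bfp) \sum_i q_i = M(\bfp)$ because $\frac{d}{dp}b_i(p_i) \le M(\bfp)$ for every $i$ (by definition of $M(\bfp)$) and $q_i \ge 0$. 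The second sum equals exactly $M(\bfp)$: its summand vanishes for $i \notin \Support(\bfp)$ since there $p_i = 0$, while for $i \in \Support(\bfp)$ the hypothesis gives $\frac{d}{dp}b_i(p_i) = M(\bfp)$, and $\sum_{i \in \Support(\bfp)} p_i = 1$. Hence $\mathcal{B}_{\calH}(\bfq) \le \mathcal{B}_{\calH}(\bfp)$, so $\bfp$ is optimal.

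For the \emph{only if} direction I would argue by contraposition with a local exchange argument. Assume the condition fails, i.e.\ there is $i^{*} \in \Support(\bfp)$ with $\frac{d}{dp}b_{i^{*}}(p_{i^{*}}) < M(\bfp)$, and pick $j^{*}$ attaining $\frac{d}{dp}b_{j^{*}}(p_{j^{*}}) = M(\bfp)$; necessarily $j^{*} \ne i^{*}$. Since $p_{i^{*}} > 0$ (so also $p_{j^{*}} < 1$), for all sufficiently small $\delta > 0$ the vector $\bfp' = \bfp + \delta(e_{j^{*}} - e_{i^{*}})$ lies in $\Delta(n)$, and $\mathcal{B}_{\calH}(\bfp') - \mathcal{B}_{\calH}(\bfp) = g(\delta) - g(0)$ where $g(\delta) = b_{j^{*}}(p_{j^{*}} + \delta) + b_{i^{*}}(p_{i^{*}} - \delta)$. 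Because $g'(0) = \frac{d}{dp}b_{j^{*}}(p_{j^{*}}) - \frac{d}{dp}b_{i^{*}}(p_{i^{*}}) = M(\bfp) - \frac{d}{dp}b_{i^{*}}(p_{i^{*}}) > 0$, we get $\mathcal{B}_{\calH}(\bfp') > \mathcal{B}_{\calH}(\bfp)$ for small enough $\delta$, so $\bfp$ is not optimal.

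Since this is essentially the KKT condition for a concave program, I do not expect a genuine obstacle; the two points that need care are the bookkeeping at the boundary of the simplex (one-sided derivatives at $p_i \in \{0,1\}$, handled by the fact that $b_i$ extends smoothly past the endpoints) and, in the \emph{if} direction, the observation that the stated condition constrains only $\Support(\bfp)$ yet is still sufficient — which goes through precisely because the out-of-support terms drop out of $\sum_i \frac{d}{dp}b_i(p_i)\, p_i$ while the in-support derivatives are pinned to the global maximum $M(\bfp)$.
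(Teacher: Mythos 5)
Your proposal is correct and follows essentially the same route as the paper: the paper's converse direction also reduces to showing $(\bfq - \bfp)\,\nabla \mathcal{B}_{\calH}(\bfp) \leq 0$ for all $\bfq \in \Delta(n)$ via the same split into support and non-support terms (invoking concavity of $\mathcal{B}_{\calH}$ to pass from the first-order condition to global optimality, just as your tangent-line inequality does), and its forward direction is the same mass-shifting exchange argument that you spell out in slightly more detail with the function $g(\delta)$.
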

\begin{proof}
Suppose that
$\bfp = (p_{1}, \dots, p_{n})$
is a maximum point of the function
$\mathcal{B}_{\calH}(\cdot)$
and assume towards contradiction that there exists some
$i \in \Support(\bfp)$
and
$j \in [n]$
such that
$\frac{d}{d p} b_{j}(p_{j})
>
\frac{d}{d p} b_{i}(p_{i})$.
Employing the continuity of $b_{i}(p)$ and $b_{j}(p)$ in
$[0, 1]$
and the fact that $p_{i}$ is strictly positive, we conclude that one can
increase $\mathcal{B}_{\calH}(\bfp)$ by shifting mass from $p_{i}$
to $p_{j}$, thus deriving a contradiction to the maximality of $\bfp$.

In the converse direction, let
$\mu = \max_{i \in [n]} \frac{d}{d p} b_{i}(p_{i})$
and suppose that
$i \in \Support(\bfp)$
implies that
$\frac{d}{d p} b_{i}(p_{i}) = \mu$.
Recalling that the functions $b_{i}(p_{i})$,
$i \in [n]$,
are concave, we conclude that the function $\mathcal{B}_{\calH}(\bfp)$ is
concave too.
We prove that $\bfp$ is a local maximum of $\mathcal{B}_{\calH}(\cdot)$, hence
it is also a global maximum due to the function's concavity.
We do so by arguing that
\[
(\bfq - \bfp) \, \nabla \mathcal{B}_{\calH}(\bfp)
\, \leq \,
0
\]
for every probability vector
$\bfq = (q_{1}, \dots, q_{n})$,
where
$\nabla \mathcal{B}_{\calH}(\bfp)$
is the (column) gradient vector of $\mathcal{B}_{\calH}(\cdot)$ at
$\bfp$.
To this end, we develop
\begin{align*}
(\bfq - \bfp) \, \nabla \mathcal{B}_{\calH}(\bfp)
\, = \, &
\sum_{i \in [n]}
(q_{i} - p_{i}) \cdot \frac{d}{d p} b_{i}(p_{i}) \\
= \, &
\sum_{i \in \Support(\bfp)}
(q_{i} - p_{i}) \cdot \mu
+
\sum_{i \in [n] - \Support(\bfp)}
q_{i} \cdot \frac{d}{d p} b_{i}(p_{i}) \\
\leq \, &
\sum_{i \in \Support(\bfp)}
(q_{i} - p_{i}) \cdot \mu
+
\sum_{i \in [n] - \Support(\bfp)}
q_{i} \cdot \mu \\
= \, &
\mu - \mu
\, = \,
0 \, ,
\end{align*}
thus establishing the assertion.
\end{proof}

To avoid cumbersome notation, we subsequently reorder the cacti according to
their derivatives at
$p = 0$
so that
\begin{equation} \label{equation:rename-cacti}
\frac{d}{d p} b_{i}(0)
\, \geq \,
\frac{d}{d p} b_{i + 1}(0)
\end{equation}
for every
$1 \leq i \leq n - 1$.
Using (\ref{equation:b-function-derivatives}), it is interesting to point out
that the derivative at
$p = 0$
is exactly
\[
\frac{d}{d p} b_{i}(0)
\, = \,
\frac{r_{i} \cdot (1 - s_{i})}{s_{i}}
\, = \,
\chi_{i} \, ,
\]
i.e., the expected nectar amount held in cactus $i$ after it has not been
visited by the bat for $t$ rounds as
$t \rightarrow \infty$.
The following observation plays a key role in the design of our algorithm.

\begin{observation} \label{observation:prefix}
If
$\bfp = (p_{1}, \dots, p_{n})$
is an optimal strategy for $\calH$, then there exists some
$1 \leq \ell \leq n$
such that
\[
\Support(\bfp)
\, = \,
[\ell] \, .
\]
\end{observation}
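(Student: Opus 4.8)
The plan is to combine the optimality characterization of Lemma~\ref{lemma:optimality-condition} with the strict monotonicity of the derivatives $\frac{d}{d p} b_{i}$ recorded in~(\ref{equation:b-function-derivatives}). The goal is to show that every cactus inside $\Support(\bfp)$ carries a strictly larger value of $\chi$ than every cactus outside it; since the cacti are indexed in non-increasing order of $\chi_{i} = \frac{d}{d p} b_{i}(0)$ by~(\ref{equation:rename-cacti}), this immediately forces the support to be a prefix $[\ell]$.

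Concretely, let $\bfp = (p_{1}, \dots, p_{n})$ be an optimal strategy, write $S = \Support(\bfp)$, and set $\mu = \max_{j \in [n]} \frac{d}{d p} b_{j}(p_{j})$. First I would observe that $S \neq \emptyset$, because $\bfp$ is a probability vector; hence it suffices to prove the identity $S = \{ i \in [n] \mid \chi_{i} > \mu \}$ and then note that, since $\chi_{1} \geq \chi_{2} \geq \cdots \geq \chi_{n}$, any set of the form $\{ i \mid \chi_{i} > \mu \}$ is automatically a prefix $[\ell]$ with $1 \leq \ell = |S| \leq n$.

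For the inclusion $S \subseteq \{ i \mid \chi_{i} > \mu \}$: fix $i \in S$. Lemma~\ref{lemma:optimality-condition} gives $\frac{d}{d p} b_{i}(p_{i}) = \mu$, and since $i \in S$ we have $p_{i} > 0$; because $\frac{d}{d p} b_{i}$ is strictly decreasing on $[0, 1]$, we get $\chi_{i} = \frac{d}{d p} b_{i}(0) > \frac{d}{d p} b_{i}(p_{i}) = \mu$. For the reverse inclusion: fix $i \notin S$, so $p_{i} = 0$ and $\chi_{i} = \frac{d}{d p} b_{i}(0) = \frac{d}{d p} b_{i}(p_{i}) \leq \mu$ by definition of $\mu$ as a maximum. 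Putting these together yields the claimed identity for $S$, and the prefix structure follows from the ordering of the $\chi_{i}$.

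I do not anticipate a real obstacle. The only points needing care are the strictness of $\chi_{i} > \mu$ for $i \in S$ — which rests on $p_{i}$ being strictly positive together with the strict monotonicity of $\frac{d}{d p} b_{i}$, both already available — and the possibility of ties among the $\chi_{i}$. Ties cause no difficulty: the argument shows no cactus in $S$ can share its $\chi$-value with a cactus outside $S$, and in any case $\{ i \mid \chi_{i} > \mu \}$ is a prefix of a non-increasingly sorted sequence regardless of ties. The degenerate case $S = \emptyset$ is excluded because every strategy is a probability vector.
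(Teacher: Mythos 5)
Your proof is correct and rests on exactly the same ingredients as the paper's: Lemma~\ref{lemma:optimality-condition}, the strict monotonicity of $\frac{d}{d p} b_{i}$ from~(\ref{equation:b-function-derivatives}), and the ordering~(\ref{equation:rename-cacti}). The paper argues by contradiction on an adjacent pair $i \notin \Support(\bfp)$, $i+1 \in \Support(\bfp)$, whereas you directly characterize $\Support(\bfp)$ as the superlevel set $\{ i \mid \chi_{i} > \mu \}$ --- a purely organizational difference that, if anything, yields a slightly sharper statement.
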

\begin{proof}
Assume towards contradiction that there exists some
$1 \leq i \leq n - 1$
such that
$i \notin \Support(\bfp)$
and
$i + 1 \in \Support(\bfp)$,
that is,
$p_{i} = 0$
and
$p_{i + 1} > 0$.
It follows that
\[
\frac{d}{d p} b_{i}(p_{i})
\, = \,
\frac{d}{d p} b_{i}(0)
\, \geq \,
\frac{d}{d p} b_{i + 1}(0)
\, > \,
\frac{d}{d p} b_{i + 1}(p_{i + 1}) \, ,
\]
where the second transition holds by (\ref{equation:rename-cacti}) and
the last transition holds because
$\frac{d}{d p} b_{i + 1}(p)$
is monotonically decreasing in
$[0, 1]$.
This derives a contradiction to Lemma~\ref{lemma:optimality-condition}, thus
establishing the assertion.
\end{proof}

We say that strategy
$\bfp = (p_{1}, \dots, p_{n})$
of the hungry bat instance $\calH$ has \emph{full support} if
$\Support(\bfp) = [n]$.
The instance $\calH$ is said to be \emph{solid} if it admits
a full support optimal strategy.

\begin{lemma} \label{lemma:solid-unique-optimal}
If $\calH$ is solid, then it admits exactly one full support optimal strategy
$\bfp = (p_{1}, \dots, p_{n})$.
Moreover, $\bfp$ can be obtained as the (unique) solution of a system of $n$
independent linear equations over the variables
$p_{1}, \dots, p_{n}$.
\end{lemma}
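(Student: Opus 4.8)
The plan is to combine the optimality characterization of Lemma~\ref{lemma:optimality-condition} with the strict monotonicity of the derivatives $\frac{d}{dp}b_i$ established in~(\ref{equation:b-function-derivatives}). Since $\calH$ is solid, it has a full-support optimal strategy $\bfp = (p_1,\dots,p_n)$ with every $p_i > 0$. By Lemma~\ref{lemma:optimality-condition} applied with $\Support(\bfp) = [n]$, there is a common value $\mu = \max_{j\in[n]}\frac{d}{dp}b_j(p_j)$ such that $\frac{d}{dp}b_i(p_i) = \mu$ for \emph{every} $i\in[n]$. First I would show that for each fixed $i$, the equation $\frac{d}{dp}b_i(p) = \mu$ has at most one solution $p\in[0,1]$: this is immediate because $\frac{d}{dp}b_i$ is strictly decreasing on $[0,1]$ (the second derivative is strictly negative there). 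Hence, once the value $\mu$ is pinned down, each coordinate $p_i$ is uniquely determined. It remains to argue that $\mu$ itself is uniquely determined, which follows from the normalization constraint $\sum_{i\in[n]} p_i = 1$: writing $p_i = g_i(\mu)$ for the unique preimage under the strictly decreasing bijection $\frac{d}{dp}b_i$, the sum $\sum_i g_i(\mu)$ is itself strictly decreasing in $\mu$ (on the range of $\mu$ for which all $g_i(\mu)\in[0,1]$), so at most one $\mu$ yields $\sum_i g_i(\mu) = 1$. Since a full-support optimal strategy exists, exactly one such $\mu$ does, and this proves uniqueness of the full-support optimal strategy.

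For the ``moreover'' part, I would exploit the explicit form of $\frac{d}{dp}b_i(p)$ from~(\ref{equation:b-function-derivatives}). Setting $\alpha_i = \frac{s_i}{1-s_i}$, the optimality condition $\frac{d}{dp}b_i(p_i) = \mu$ reads $r_i \alpha_i = \mu\,(p_i + \alpha_i)^2$, equivalently $p_i + \alpha_i = \sqrt{r_i\alpha_i/\mu}$, i.e.
\[
p_i \;=\; \sqrt{\tfrac{r_i\alpha_i}{\mu}} \;-\; \alpha_i .
\]
This is the natural route to an explicit description, but it is not literally a system of \emph{linear} equations in $p_1,\dots,p_n$ because of the $\sqrt{1/\mu}$ factor. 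To get genuine linearity as the statement demands, I would instead treat $\lambda := 1/\sqrt{\mu}$ as an auxiliary unknown and observe that each condition becomes $p_i + \alpha_i = \sqrt{r_i\alpha_i}\cdot\lambda$, which together with $\sum_{i\in[n]} p_i = 1$ forms a system of $n$ linear equations in the $n$ unknowns $p_1,\dots,p_{n-1},\lambda$ (using the $n$-th equation to eliminate, say, $p_n$), or more symmetrically $n+1$ linear equations $p_i + \alpha_i = \sqrt{r_i\alpha_i}\,\lambda$ for $i\in[n]$ plus $\sum_i p_i = 1$ in the $n+1$ unknowns $p_1,\dots,p_n,\lambda$. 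The uniqueness argument from the first paragraph guarantees this linear system has a unique solution, so solving it recovers $\bfp$; I would phrase the lemma's conclusion in terms of this augmented linear system.

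The main obstacle I anticipate is purely expository: the clean statement ``system of $n$ independent linear equations over $p_1,\dots,p_n$'' is slightly at odds with the square-root appearing in $\mu$, so the proof has to make the substitution $\lambda = 1/\sqrt{\mu}$ (or $\lambda^2 = 1/\mu$) explicit and be careful about which variable is the ``extra'' one and why the resulting $n\times n$ (after elimination) system is nonsingular. Nonsingularity follows from the uniqueness established via monotonicity, but I would double-check the edge behavior — namely that solidity guarantees the solution has all $p_i$ strictly in $(0,1)$ and in particular $p_i + \alpha_i > 0$, so that the square-root manipulation is valid — since the lemma hypothesis ``$\calH$ is solid'' is exactly what licenses this. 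Everything else (existence via compactness, the common-value structure) is already in hand from Lemma~\ref{lemma:optimality-condition} and the preceding discussion.
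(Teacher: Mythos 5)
Your proof is correct, but it takes a genuinely different route from the paper's on both halves of the lemma. For uniqueness, the paper equates the derivatives of \emph{consecutive} cacti, takes square roots to obtain the $n-1$ genuinely linear equations \eqref{equation:linear-equation-derivatives} in $p_{1}, \dots, p_{n}$ alone (the common value $\mu$ is eliminated by taking ratios), appends $\sum_{i} p_{i} = 1$, and then proves the resulting $n \times n$ coefficient matrix $\mathbf{A}$ has full rank via a sign-pattern argument on $\mathbf{x}^{T}\mathbf{A}$; this full-rank computation is the technical heart of the paper's proof and it delivers the lemma exactly as phrased. You instead argue uniqueness directly from strict monotonicity: each $\frac{d}{dp}b_{i}$ is a strictly decreasing bijection, so its inverse $g_{i}$ is strictly decreasing, hence $\sum_{i} g_{i}(\mu) = 1$ pins down $\mu$ and therefore every $p_{i}$ --- a cleaner and more elementary argument that bypasses the matrix entirely. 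For the linear system you keep $\lambda = 1/\sqrt{\mu}$ as an auxiliary unknown, getting $n+1$ equations in $n+1$ variables rather than the stated $n$ equations over $p_{1}, \dots, p_{n}$; as you note, this is a cosmetic mismatch, and eliminating $\lambda$ between consecutive equations recovers precisely the paper's system \eqref{equation:linear-equation-derivatives}. One small looseness: you justify nonsingularity of your augmented system by appealing to uniqueness of the optimal strategy, but uniqueness within the probability simplex does not by itself rule out a larger affine solution set over $\Reals^{n+1}$; fortunately your system is effectively triangular (each $p_{i}$ is an affine function of $\lambda$, and substituting into $\sum_{i} p_{i} = 1$ determines $\lambda$ uniquely because $\sum_{i}\sqrt{r_{i}\alpha_{i}} > 0$), so nonsingularity is immediate by direct substitution and the gap is trivially patched.
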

\begin{proof}
If
$\bfp = (p_{1}, \dots, p_{n})$
is a full support optimal strategy of $\calH$, then
Lemma~\ref{lemma:optimality-condition} ensures that
\[
\frac{d}{d p} b_{i}(p_{i})
\, = \,
\frac{d}{d p} b_{i + 1}(p_{i + 1})
\]
for
$1 \leq i \leq n - 1$.
Using the expressions in (\ref{equation:b-function-derivatives}) for
$\frac{d}{d p} b_{i}(p_{i})$
and
$\frac{d}{d p} b_{i + 1}(p_{i + 1})$,
these
$n - 1$
equations can be rewritten as
\begin{equation} \label{equation:linear-equation-derivatives}
\sqrt{\frac{r_{i + 1} \cdot s_{i + 1}}{1 - s_{i + 1}}} \cdot p_{i}
-
\sqrt{\frac{r_{i} \cdot s_{i}}{1 - s_{i}}} \cdot p_{i + 1}
\, = \,
\frac{s_{i + 1}}{1 - s_{i + 1}} \cdot
\sqrt{\frac{r_{i} \cdot s_{i}}{1 - s_{i}}}
-
\frac{s_{i}}{1 - s_{i}} \cdot
\sqrt{\frac{r_{i + 1} \cdot s_{i + 1}}{1 - s_{i + 1}}}
\end{equation}
for
$1 \leq i \leq n - 1$.
Combined with
\begin{equation} \label{equation:linear-equation-sum-1}
\sum_{i \in [n]} p_{i}
\, = \,
1 \, ,
\end{equation}
we obtain $n$ linear equations in the variables
$p_{1}, \dots, p_{n}$.

Let
$\mathbf{A} = (a_{i, j}) \in \Reals^{n \times n}$
be the coefficient matrix of the aforementioned system of linear equations,
where row
$1 \leq i \leq n - 1$
of $\mathbf{A}$ corresponds to the $i$-th equation in
(\ref{equation:linear-equation-derivatives}) and row $n$ of $\mathbf{A}$
corresponds to the equation in (\ref{equation:linear-equation-sum-1}).
Refer to Figure~\ref{figure:matrix} for an illustration of the signs of
$\mathbf{A}$'s entries.
Consider an arbitrary non-zero (column) vector
$\mathbf{x} = (x_{i}) \in \Reals^{n}$
and let
$\mathbf{x}^{T} \mathbf{A} = \mathbf{c}^{T}$,
where
$\mathbf{c} = (c_{j}) \in \Reals^{n}$.
We prove that $\mathbf{c}$ is not the zero vector, hence $\mathbf{A}$ has full
rank which establishes the uniqueness of $\bfp$.

Assume without loss of generality that
$x_{n} \leq 0$.
Notice that if
$x_{1} = \cdots x_{n - 1} = 0$,
then $x_{n}$ must be strictly negative as
$\mathbf{x} \neq \mathbf{0}$,
hence
$c_{j} < 0$
for every
$1 \leq j \leq n$.
So, assume hereafter that there exists some
$1 \leq i \leq n - 1$
such that
$x_{i} \neq 0$.

Since
$a_{1, 1} > 0$,
$a_{n, 1} > 0$,
and
$a_{i, 1} = 0$
for every
$2 \leq i \leq n - 1$,
it follows that if
$c_{1} = 0$,
then
$x_{1} \geq 0$.
Likewise, since
$a_{n - 1, n} < 0$,
$a_{n, n} > 0$,
and
$a_{i, n} = 0$
for every
$1 \leq i \leq n - 2$,
it follows that if
$c_{n} = 0$,
then
$x_{n - 1} \leq 0$.

Therefore, recalling that
$x_{i} \neq 0$
for some
$1 \leq i \leq n - 1$,
if
$c_{1} = c_{n} = 0$,
then, there must exist some
$2 \leq i \leq n - 1$
such that
(I)
$x_{i - 1} \geq 0$;
(II)
$x_{i} \leq 0$;
and
(III)
at least one of the inequalities in (I) and (II) is strict.
As
$a_{i - 1, i} < 0$,
$a_{i, i} > 0$,
$a_{n, i} > 0$,
and
$a_{i', i} = 0$
for every
$i' \in [n] - \{ i - 1, i, n \}$,
we conclude that
$c_{i} < 0$.
The assertion follows.
\end{proof}

\begin{figure}
{\centering
\includegraphics[width=0.5\textwidth]{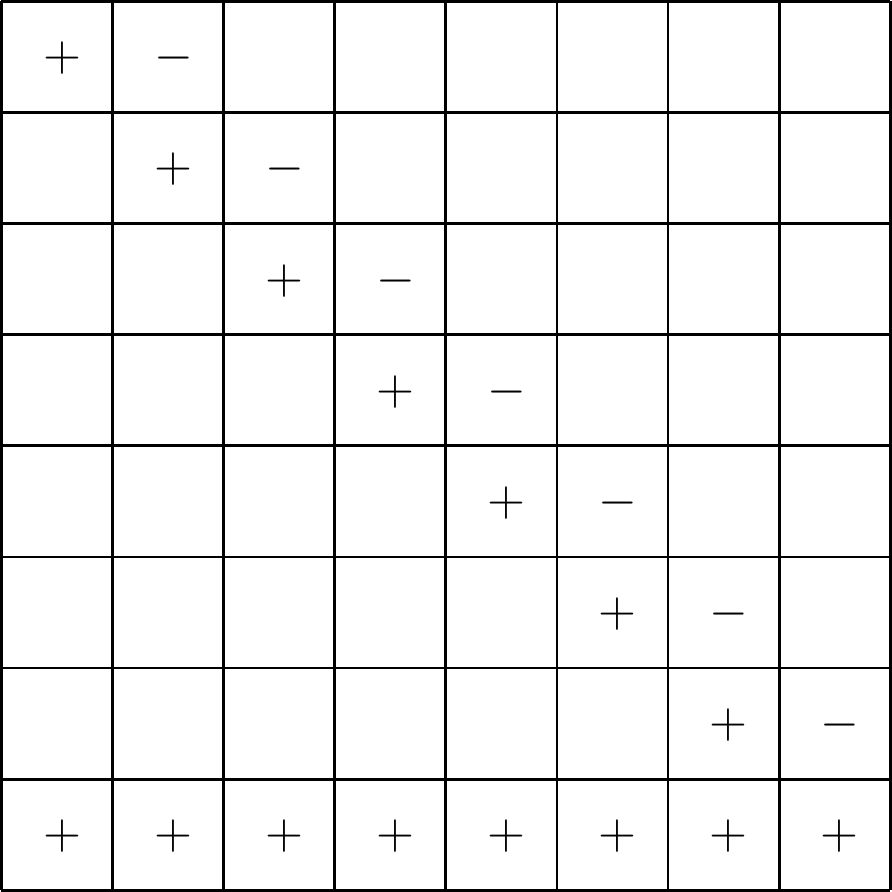}
\par}
\caption{\label{figure:matrix}%
The signs of the entries of matrix $\mathbf{A}$ when
$n = 8$.
Empty cells represent $0$ entries.}
\end{figure}

\begin{corollary} \label{corollary:unique-optimal}
The instance $\calH$ admits a unique optimal solution.
\end{corollary}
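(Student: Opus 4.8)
The plan is to reduce uniqueness to Lemma~\ref{lemma:solid-unique-optimal} via Observation~\ref{observation:prefix}. Existence of an optimal strategy is already in hand from the continuity of $\mathcal{B}_{\calH}$ and the compactness of $\Delta(n)$, so only uniqueness is at stake. The key elementary fact I would use is that $b_{i}(0) = 0$ for every $i$ (immediate from the formula in Lemma~\ref{lemma:b-function}): this means that for any $1 \le m \le n$, padding a strategy $\bfq \in \Delta(m)$ of the truncated instance $\calH_{m} = \langle m, \{r_{i}\}_{i \in [m]}, \{s_{i}\}_{i \in [m]} \rangle$ with zeros does not change the objective, so $\mathcal{B}_{\calH}(q_{1}, \dots, q_{m}, 0, \dots, 0) = \mathcal{B}_{\calH_{m}}(\bfq)$. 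I would first record the consequence that if $\bfp$ is optimal for $\calH$ with $\Support(\bfp) \subseteq [m]$, then its restriction $\bar{\bfp} = (p_{1}, \dots, p_{m})$ is optimal for $\calH_{m}$, which follows from the chain $\mathcal{B}_{\calH_{m}}(\bar{\bfp}) = \mathcal{B}_{\calH}(\bfp) \ge \mathcal{B}_{\calH}(\bfq, 0, \dots, 0) = \mathcal{B}_{\calH_{m}}(\bfq)$ holding for every $\bfq \in \Delta(m)$.

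Next, given two optimal strategies $\bfp$ and $\bfp'$ for $\calH$, I would invoke Observation~\ref{observation:prefix} to write their supports as prefixes $[\ell]$ and $[\ell']$, assuming without loss of generality that $\ell \le \ell'$. Since $\mathcal{B}_{\calH}$ is a sum of the concave functions $b_{i}$ (as already noted in the proof of Lemma~\ref{lemma:optimality-condition}), its maximizer set is convex, so the midpoint $\bfm = \tfrac{1}{2}(\bfp + \bfp')$ is again optimal for $\calH$, and --- because both vectors are nonnegative --- $\Support(\bfm) = [\ell] \cup [\ell'] = [\ell']$. Applying the reduction above with $m = \ell'$, both restrictions $\bar{\bfm}$ and $\bar{\bfp'}$ are optimal for $\calH_{\ell'}$ and both have full support $[\ell']$; in particular $\calH_{\ell'}$ is solid. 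Lemma~\ref{lemma:solid-unique-optimal} then forces $\bar{\bfm} = \bar{\bfp'}$, and since the coordinates beyond $\ell'$ vanish in both $\bfm$ and $\bfp'$, this gives $\bfm = \bfp'$, hence $\bfp = \bfp'$.

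The part that needs the most care is the reduction to the truncated instance $\calH_{m}$: I have to make sure the zero-padding embedding genuinely preserves feasibility (the entries still sum to $1$) and the objective value, which hinges entirely on the identity $b_{i}(0) = 0$, and then that optimality transfers along this embedding in the direction used above (the displayed chain of (in)equalities). Once that bookkeeping is pinned down, the corollary is just the combination of Observation~\ref{observation:prefix}, convexity of the set of maximizers of a concave function, and the uniqueness clause of Lemma~\ref{lemma:solid-unique-optimal}; no further computation is required.
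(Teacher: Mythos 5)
Your proof is correct, but it takes a genuinely different route from the paper's. The paper splits into two cases: when $\calH$ is solid, it rules out a second optimal strategy of smaller support by a pigeonhole argument (some coordinate must strictly increase) combined with the strict monotonicity of the derivatives $\frac{d}{dp}b_{i}$ and Lemma~\ref{lemma:optimality-condition}, obtaining a contradictory chain of inequalities; when $\calH$ is not solid, it restricts two putative optima to their (prefix) supports, observes that the resulting truncated instances are solid, and pads the shorter one with zeros to contradict the solid case. You instead exploit the concavity of $\mathcal{B}_{\calH}$ once and for all: the set of maximizers of a concave function over the simplex is convex, so the midpoint of two optima is itself optimal and its support is the union of the two supports, which by Observation~\ref{observation:prefix} is the longer prefix $[\ell']$; restricting to $\calH_{\ell'}$ (your zero-padding bookkeeping, which indeed only needs $b_{i}(0)=0$, is sound) produces two \emph{full-support} optima of a solid instance, so the uniqueness clause of Lemma~\ref{lemma:solid-unique-optimal} applies directly. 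This neatly sidesteps the paper's pigeonhole/derivative-chain argument, whose whole purpose is to bridge the gap between ``unique among full-support strategies'' and ``unique among all strategies''; the averaging trick closes that gap for free. What the paper's longer route buys is a slightly stronger standalone fact (a solid instance admits no optimal strategy of non-full support) proved from first principles via the optimality condition; what yours buys is brevity and a cleaner conceptual structure. As a side remark, since each $b_{i}$ has strictly negative second derivative by \eqref{equation:b-function-derivatives}, $\mathcal{B}_{\calH}$ is in fact \emph{strictly} concave, which would yield uniqueness of the maximizer in one line without invoking Lemma~\ref{lemma:solid-unique-optimal} at all --- but your argument, like the paper's, does not need this observation.
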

\begin{proof}
We first establish the assertion for the case that $\calH$ is solid.
To this end, let
$\bfp = (p_{1}, \dots, p_{n})$
be the unique full support optimal strategy of $\calH$ guaranteed by
Lemma~\ref{lemma:solid-unique-optimal} and assume towards contradiction
that $\calH$ admits another optimal strategy
$\bfq = (q_{1}, \dots, q_{n})$
with
$|\Support(\bfq)| = \ell < n$,
recalling that
$\Support(\bfq) = [\ell]$
by Observation~\ref{observation:prefix}.

Since $\bfq$ assigns all its mass to the cacti in $[\ell]$ while
$\bfp$ assigns positive mass also to the cacti in
$[n] - [\ell]$,
it follows that
\[
\sum_{i \in [\ell]} q_{i}
\, = \,
1
\, > \,
\sum_{i \in [\ell]} p_{i} \, .
\]
By the pigeonhole principle, there exists some
$i \in [n]$
such that
$q_{i} > p_{i}$.
We can now derive a contradiction by developing
\[
\frac{d}{d p} b_{n}(0)
\, > \,
\frac{d}{d p} b_{n}(p_{n})
\, = \,
\frac{d}{d p} b_{i}(p_{i})
\, > \,
\frac{d}{d p} b_{i}(q_{i})
\, \geq \,
\frac{d}{d p} b_{n}(q_{n})
\, = \,
\frac{d}{d p} b_{n}(0) \, ,
\]
where the first and third transitions hold because
$\frac{d}{d p} b_{i}(\cdot)$
is monotonically decreasing in
$[0, 1]$
and the second and fourth transitions follow from
Lemma~\ref{lemma:optimality-condition}.

So, suppose that $\calH$ is not solid and assume towards contradiction that
strategies
$\bfp^{1} = (p^{1}_{1}, \dots, p^{1}_{n})$
and
$\bfp^{2} = (p^{2}_{1}, \dots, p^{2}_{n})$,
$\bfp^{1} \neq \bfp^{2}$,
are optimal for $\calH$.
By Observation~\ref{observation:prefix}, there exist
$1 \leq \ell^{1}, \ell^{2} < n$
such that
$\Support(\bfp^{1}) = [\ell^{1}]$
and
$\Support(\bfp^{2}) = [\ell^{2}]$.

For
$j = 1, 2$,
let
$\calH_{j} =
\langle \ell^{j}, \{ r_{i} \}_{i \in [\ell^{j}]}, \{ s_{i} \}_{i \in [\ell^{j}]}
\rangle$
be the restriction of $\calH$ to the cacti in $[\ell^{j}]$.
Since $\bfp^{j}$ is an optimal strategy for $\calH$ and
$p^{j}_{i} = 0$
for every
$\ell^{j} < i \leq n$, it follows that the
probability vector
$(p^{j}_{1}, \dots, p^{j}_{\ell^{j}})$
is a full support optimal strategy for $\calH_{j}$, which immediately implies
that $\calH_{j}$ is solid.
If
$\ell^{1} = \ell^{2}$,
then we obtain two distinct optimal strategies for the solid instance
$\calH_{1} = \calH_{2}$,
thus deriving a contradiction.

Assume without loss of generality that
$\ell^{1} < \ell^{2}$.
Since $\bfp^{1}$ is an optimal strategy for $\calH$ and
$p^{1}_{i} = 0$
for every
$\ell^{1} < \ell^{2} < i \leq n$, it follows that the
probability vector
$(p^{1}_{1}, \dots, p^{1}_{\ell^{2}})
=
(p^{1}_{1}, \dots, p^{1}_{\ell^{1}}, 0, \dots, 0)$
is an optimal strategy for $\calH_{2}$.
But this probability vector differs from
$(p^{2}_{1}, \dots, p^{2}_{\ell^{2}})$
which is also an optimal strategy for $\calH_{2}$, thus deriving a
contradiction as $\calH_{2}$ is solid.
\end{proof}

We are now ready to develop a computationally efficient algorithm that
constructs the unique optimal strategy for a given hungry bat instance
$\calH =
\langle n, \{ r_{i} \}_{i \in [n]}, \{ s_{i} \}_{i \in [n]}
\rangle$.
For
$1 \leq \ell \leq n$,
let
$\calH^{\ell} =
\langle \ell, \{ r_{i} \}_{i \in [\ell]}, \{ s_{i} \}_{i \in [\ell]}
\rangle$
be the restriction of $\calH$ to the cacti in $[\ell]$.
The instance $\calH^{1}$ is clearly solid, realized by the (degenerated)
strategy that assigns the whole mass to its single cactus.
Assume by induction that the instance $\calH^{\ell}$,
$1 \leq \ell \leq n - 1$,
is solid and that we already hold its (unique) full support optimal strategy
$\bfp^{\ell} = (p^{\ell}_{1}, \dots, p^{\ell}_{\ell})$.
Lemma~\ref{lemma:optimality-condition} ensures that there exists some
$\mu_{\ell}$ such that
$\frac{d}{d p} b_{i}(p^{\ell}_{i}) = \mu_{\ell}$
for every
$i \in [\ell]$.

The algorithm continues by testing whether
\begin{equation} \label{equation:algorithm-test}
\frac{d}{d p} b_{\ell + 1}(0)
\, > \,
\mu_{\ell} \, .
\end{equation}
If (\ref{equation:algorithm-test}) holds, then we conclude that the instance
$\calH^{\ell + 1}$
is solid (see Observation~\ref{observation:test-succeeds}) and apply
Lemma~\ref{lemma:solid-unique-optimal} to construct its full support optimal
strategy
$\bfp^{\ell + 1}$.
Otherwise, the algorithm returns the probability vector
$\bfq = (q_{1}, \dots, q_{n})$
defined by setting
\[
q_{i}
\, = \,
\begin{cases}
p^{\ell}_{i}, & \text{if } 1 \leq i \leq \ell \\
0, & \text{if } \ell < i \leq n
\end{cases}
\]
as the optimal strategy of instance
$\calH = \calH^{n}$.
The algorithm's correctness is derived from the following two observations.

\begin{observation} \label{observation:test-succeeds}
If (\ref{equation:algorithm-test}) holds, then the instance
$\calH^{\ell + 1}$
is solid.
\end{observation}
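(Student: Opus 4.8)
The plan is to show that any optimal strategy $\bfp^{*}$ of $\calH^{\ell+1}$ (one exists by continuity of $\mathcal{B}_{\calH^{\ell+1}}$ and compactness of $\Delta(\ell+1)$, as noted at the start of this section) assigns strictly positive probability to cactus $\ell+1$; by Observation~\ref{observation:prefix} this forces $\Support(\bfp^{*}) = [\ell+1]$, which is exactly the claim that $\calH^{\ell+1}$ is solid. The argument proceeds by comparing the value of $\bfp^{*}$ with the value of the ``padded'' strategy $\bfq \in \Delta(\ell+1)$ whose first $\ell$ coordinates are $p^{\ell}_{1},\dots,p^{\ell}_{\ell}$ and whose last coordinate is $0$; note that $\mathcal{B}_{\calH^{\ell+1}}(\bfq) = \mathcal{B}_{\calH^{\ell}}(\bfp^{\ell})$ since $b_{\ell+1}(0) = 0$ (read off the formula in Lemma~\ref{lemma:b-function}).

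First I would show that $\bfq$ is \emph{not} optimal for $\calH^{\ell+1}$. Pick any $j \in [\ell]$ with $p^{\ell}_{j} > 0$ (such $j$ exists because $\bfp^{\ell}$ has full support) and, for small $\delta > 0$, move probability mass $\delta$ from coordinate $j$ to coordinate $\ell+1$. Using that each $b_{i}$ is concave with first and second derivatives as in~(\ref{equation:b-function-derivatives}), a routine first-order estimate shows that this perturbation changes the objective by $\delta\bigl(\tfrac{d}{dp}b_{\ell+1}(0) - \tfrac{d}{dp}b_{j}(p^{\ell}_{j})\bigr) + O(\delta^{2}) = \delta\bigl(\tfrac{d}{dp}b_{\ell+1}(0) - \mu_{\ell}\bigr) + O(\delta^{2})$, where I used $\tfrac{d}{dp}b_{j}(p^{\ell}_{j}) = \mu_{\ell}$. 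By the test~(\ref{equation:algorithm-test}) the bracketed quantity is strictly positive, so for $\delta$ small enough the perturbed strategy strictly beats $\bfq$. Hence $\mathcal{B}_{\calH^{\ell+1}}(\bfp^{*}) > \mathcal{B}_{\calH^{\ell+1}}(\bfq) = \mathcal{B}_{\calH^{\ell}}(\bfp^{\ell})$.

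Finally I would rule out $p^{*}_{\ell+1} = 0$ by contradiction: if $p^{*}_{\ell+1} = 0$, then $(p^{*}_{1},\dots,p^{*}_{\ell})$ is a legitimate strategy for $\calH^{\ell}$ and, again because $b_{\ell+1}(0) = 0$, we get $\mathcal{B}_{\calH^{\ell+1}}(\bfp^{*}) = \mathcal{B}_{\calH^{\ell}}\bigl((p^{*}_{1},\dots,p^{*}_{\ell})\bigr) \le \mathcal{B}_{\calH^{\ell}}(\bfp^{\ell})$ by the optimality of $\bfp^{\ell}$ for $\calH^{\ell}$ --- contradicting the strict inequality from the previous paragraph. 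Therefore $p^{*}_{\ell+1} > 0$, and Observation~\ref{observation:prefix} yields $\Support(\bfp^{*}) = [\ell+1]$, establishing solidity. The only step that requires care is the perturbation estimate, i.e., confirming that shifting an infinitesimal mass toward cactus $\ell+1$ strictly increases the objective; this is precisely where the test~(\ref{equation:algorithm-test}) and the concavity/monotonicity of the $b_{i}$'s recorded in~(\ref{equation:b-function-derivatives}) enter, and it is otherwise a standard single-variable computation.
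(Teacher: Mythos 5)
Your proof is correct, but it reaches the conclusion by a somewhat different route than the paper. The paper argues by contradiction: if $\calH^{\ell+1}$ were not solid, its optimal strategy would have the form $\bfp'=(p'_{1},\dots,p'_{\ell},0)$, whose restriction is then optimal for $\calH^{\ell}$ and hence, by the \emph{uniqueness} guaranteed in Lemma~\ref{lemma:solid-unique-optimal}, equal to $\bfp^{\ell}$; the test \eqref{equation:algorithm-test} then shows that $\bfp'$ violates the necessary optimality condition of Lemma~\ref{lemma:optimality-condition} for $\calH^{\ell+1}$, because cactus $\ell+1$ has derivative $\frac{d}{dp}b_{\ell+1}(0)>\mu_{\ell}$ at the zero mass assigned to it. You instead compare values directly: the padded strategy $\bfq=(p^{\ell}_{1},\dots,p^{\ell}_{\ell},0)$ is strictly improvable by shifting infinitesimal mass toward cactus $\ell+1$ --- a perturbation computation that is exactly the one underlying the ``only if'' direction of Lemma~\ref{lemma:optimality-condition}, so you are inlining that argument rather than citing the lemma --- whence the optimum of $\calH^{\ell+1}$ strictly exceeds that of $\calH^{\ell}$, and any strategy with $p^{*}_{\ell+1}=0$ is capped at the latter value. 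Your version has the small advantage of using only the optimality of $\bfp^{\ell}$ for $\calH^{\ell}$, never its uniqueness; the paper's version is shorter because it leans on Lemmas \ref{lemma:optimality-condition} and \ref{lemma:solid-unique-optimal} as black boxes. Both arguments correctly rely on Observation~\ref{observation:prefix} applied to $\calH^{\ell+1}$ (legitimate, since the ordering \eqref{equation:rename-cacti} is inherited by the restricted instance): the paper uses it to force $p'_{\ell+1}=0$ in the non-solid case, and you use it to upgrade $p^{*}_{\ell+1}>0$ to full support.
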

\begin{proof}
Assume towards contradiction that
$\calH^{\ell + 1}$
is not solid and let
$\bfp' = (p'_{1}, \dots, p'_{\ell}, 0)$
be its optimal strategy.
By definition, the probability vector
$(p'_{1}, \dots, p'_{\ell})$
is an optimal strategy for instance $\calH^{\ell}$.
Since
$\bfp^{\ell} = (p^{\ell}_{1}, \dots, p^{\ell}_{\ell})$
is the unique optimal strategy of $\calH^{\ell}$, it follows that
$p'_{i} = p^{\ell}_{i}$
for every
$1 \leq i \leq \ell$.
This derives a contradiction to Lemma~\ref{lemma:optimality-condition} when
applied to $\bfp'$ and
$\calH^{\ell + 1}$
as (\ref{equation:algorithm-test}) implies that
\[
\frac{d}{d p} b_{\ell + 1}(p'_{\ell + 1})
\, = \,
\frac{d}{d p} b_{\ell + 1}(0)
\, > \,
\mu_{l}
\, = \,
\frac{d}{d p} b_{i}(p^{\ell}_{i})
\, = \,
\frac{d}{d p} b_{i}(p'_{i})
\]
for any
$i \in \Support(\bfp')$.
\end{proof}

\begin{observation} \label{observation:test-fails}
If (\ref{equation:algorithm-test}) does not hold, then $\bfq$ is the optimal
strategy of instance $\calH$.
\end{observation}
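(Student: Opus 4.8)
The plan is to verify the necessary and sufficient optimality condition of Lemma~\ref{lemma:optimality-condition} directly for the strategy $\bfq$ with respect to the full instance $\calH = \calH^{n}$. Recall that by the inductive hypothesis, $\bfp^{\ell} = (p^{\ell}_{1}, \dots, p^{\ell}_{\ell})$ is the unique full support optimal strategy of $\calH^{\ell}$, and hence (again by Lemma~\ref{lemma:optimality-condition}) there is a common value $\mu_{\ell}$ with $\frac{d}{d p} b_{i}(p^{\ell}_{i}) = \mu_{\ell}$ for every $i \in [\ell]$. Note first that $\bfq$ is a legitimate probability vector since $\sum_{i \in [n]} q_{i} = \sum_{i \in [\ell]} p^{\ell}_{i} = 1$. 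Since $q_{i} = p^{\ell}_{i}$ for $i \in [\ell]$ and $\Support(\bfq) = [\ell]$, the left-hand side of the condition in Lemma~\ref{lemma:optimality-condition} equals $\mu_{\ell}$ for every $i \in \Support(\bfq)$, so it remains to show that $\max_{j \in [n]} \frac{d}{d p} b_{j}(q_{j}) = \mu_{\ell}$.

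To this end I would observe that the maximum over $j \in [\ell]$ is exactly $\mu_{\ell}$, so it suffices to show $\frac{d}{d p} b_{j}(q_{j}) = \frac{d}{d p} b_{j}(0) \leq \mu_{\ell}$ for every $\ell < j \leq n$, using $q_{j} = 0$ in this range. The reordering~(\ref{equation:rename-cacti}) of the cacti by their derivatives at $p = 0$ gives $\frac{d}{d p} b_{j}(0) \leq \frac{d}{d p} b_{\ell + 1}(0)$ for all such $j$, and the failure of test~(\ref{equation:algorithm-test}) means precisely $\frac{d}{d p} b_{\ell + 1}(0) \leq \mu_{\ell}$. Chaining these two inequalities yields $\frac{d}{d p} b_{j}(0) \leq \mu_{\ell}$, as desired. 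Hence $\max_{j \in [n]} \frac{d}{d p} b_{j}(q_{j}) = \mu_{\ell} = \frac{d}{d p} b_{i}(q_{i})$ for every $i \in \Support(\bfq)$, and Lemma~\ref{lemma:optimality-condition} certifies that $\bfq$ is an optimal strategy for $\calH$.

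There is no genuine obstacle here beyond bookkeeping; the single point that needs care is that the optimality condition must be invoked for the $n$-cactus instance $\calH^{n}$ rather than for $\calH^{\ell + 1}$, which is exactly why the global reordering~(\ref{equation:rename-cacti}) — guaranteeing that $\frac{d}{d p} b_{\ell + 1}(0)$ dominates $\frac{d}{d p} b_{j}(0)$ for every $j > \ell$ — is what lets us dispose of all the remaining coordinates in one step. Uniqueness of this optimal strategy is already covered by Corollary~\ref{corollary:unique-optimal}, so nothing further is required, and together with Observation~\ref{observation:test-succeeds} this completes the correctness proof of the algorithm and thus the proof of Theorem~\ref{theorem:optimal-strategy}.
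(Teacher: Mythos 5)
Your proof is correct and follows essentially the same route as the paper's: it uses the failure of test~(\ref{equation:algorithm-test}) together with the reordering~(\ref{equation:rename-cacti}) to bound $\frac{d}{d p} b_{j}(0) \leq \mu_{\ell}$ for all $j > \ell$, combines this with $\frac{d}{d p} b_{i}(q_{i}) = \mu_{\ell}$ on $[\ell]$, and invokes Lemma~\ref{lemma:optimality-condition} for the full instance $\calH = \calH^{n}$. No gaps.
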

\begin{proof}
Since (\ref{equation:algorithm-test}) does not hold, it follows that
\[
\mu_{\ell}
\, \geq \,
\frac{d}{d p} b_{\ell + 1}(0)
\geq
\frac{d}{d p} b_{i}(0)
\, = \,
\frac{d}{d p} b_{i}(q_{i})
\]
for every
$\ell < i \leq n$,
where the second transition is due to (\ref{equation:rename-cacti}).
On the other hand, by the construction of $\bfq$, we know that
\[
\frac{d}{d p} b_{i}(q_{i})
\, = \,
\frac{d}{d p} b_{i}(p^{\ell}_{i})
\, = \,
\mu_{\ell}
\]
for every
$1 \leq i \leq \ell$.
The assertion follows by applying Lemma~\ref{lemma:optimality-condition} to
strategy $\bfq$.
\end{proof}

\section{A Small Core}
\label{section:small-core}
Consider a hungry bat instance
$\calH = \langle n, \{ r_{i} \}_{i \in [n]}, \{ s_{i} \}_{i \in [n]} \rangle$
and fix some
$0 < \varepsilon < 1$.
Let
\[
\sigma
\, = \,
\min_{i \in [n]} s_{i} \, .
\]
In this section, we establish Theorem~\ref{theorem:small-core} by proving that
there exists a strategy $\bfp$ for $\calH$ that satisfies:
(1)
$\bfp$ is supported by a core
$S = \Support(\bfp)$
of size
$|S|
\leq
\frac{2 (1 - \sigma)}{\varepsilon \cdot \sigma}$;
and
(2)
$\mathcal{B}_{\calH}(\bfp)
\geq
(1 - \varepsilon) \cdot \mathcal{B}_{\calH}(\bfp^{*})$,
where $\bfp^{*}$ is the optimal strategy for $\calH$.

Similarly to the assumption made in
Section~\ref{section:solving-optimization-problem} on the order of the cacti
(see \eqref{equation:rename-cacti}), here too we assume that the cacti are
reordered according to their derivatives at
$p = 0$
so that
\[
\chi_{i}
\, = \,
\frac{d}{d p} b_{i}(0)
\, \geq \,
\frac{d}{d p} b_{i + 1}(0)
\, = \,
\chi_{i + 1}
\]
for every
$1 \leq i \leq n - 1$.
Following that, we set the core $S$ so that it includes cacti
$i = 1, \dots, k$,
where
\[
k
\, = \,
\frac{2 \cdot (1 - \sigma)}{\varepsilon \cdot \sigma} \, ,
\]
and prove that there exists a strategy
$\bfp = (p_{1}, \dots, p_{n})$
with support
$\Support(\bfp) = S = [k]$
that satisfies
$\mathcal{B}_{\calH}(\bfp)
\geq
(1 - \varepsilon) \cdot \mathcal{B}_{\calH}(\bfp^{*})$.

Given a strategy
$\bfq = (q_{1}, \dots, q_{n})$,
we refer to
$\sum_{i \in [k]} q_{i}$
and
$\sum_{i \in [n] - [k]} q_{i}$
as the \emph{prefix weight} and \emph{suffix weight} of $\bfq$, respectively,
noting that the two weights sum to $1$. 
Let
$W = \sum_{i \in [n] - [k]} p^{*}_{i}$
denote the suffix weight of $\bfp^{*}$.

The desired strategy $\bfp$ is constructed from the optimal strategy
$\bfp^{*}$ by gradually shifting mass from the suffix weight to the prefix
weight.
For the sake of easier exposition, we present this construction by means of a
(conceptually) iterative procedure that generates a sequence of strategies
$\bfq^{w} = (q^{w}_{1}, \dots, q^{w}_{n})$
for a variable $w$ that ranges from
$w = W$
down to
$w = 0$
in infinitesimally small decrements, where
$\bfq^{W} = \bfp^{*}$
and
$\bfq^{0} = \bfp$.
Throughout this process, we maintain the invariant that the suffix weight of
$\bfq^{w}$ is
$\sum_{i \in [n] - [k]} q^{w}_{i} = w$.

Fix some
$0 < w \leq W$
and suppose that we have already constructed the strategy
$\bfq^{w} = (q^{w}_{1}, \dots, q^{w}_{n})$.
Since the suffix weight of $\bfq^{w}$ is strictly positive, it follows that
there exists some
$j^{-} = j^{-}(w) \in [n] - [k]$
such that
\begin{equation} \label{equation:j-minus}
q^{w}_{j^{-}}
\, > \,
0 \, .
\end{equation}
Furthermore, since the prefix weight of $\bfq^{w}$ is strictly smaller than
$1$, it follows, by the pigeonhole principle, that there exists some
$j^{+} = j^{+}(w) \in [k]$
such that
\begin{equation} \label{equation:j-plus}
q^{w}_{j^{+}}
\, < \,
\frac{1}{k}
\, = \,
\frac{\varepsilon \cdot \sigma}{2 \cdot (1 - \sigma)} \, .
\end{equation}

Let
$d w$
be an infinitesimally small amount of mass.
The strategy
$\bfq^{w - d w} = (q^{w - d w}_{1}, \dots, q^{w - d w}_{n})$
is now generated from $\bfq^{w}$ by transferring
$d w$
mass from cactus $j^{-}$ to cactus $j^{+}$, more formally
\[
q^{w - d w}_{i}
\, = \,
\begin{cases}
q^{w}_{i} + d w, & \text{if } i = j^{+} \\
q^{w}_{i} - d w, & \text{if } i = j^{-} \\
q^{w}_{i}, & \text{otherwise}
\end{cases} \, .
\]
It remains to prove that
$\mathcal{B}_{\calH}(\bfq^{0})
\geq
(1 - \varepsilon) \cdot \mathcal{B}_{\calH}(\bfq^{W})$
or alternatively that
\begin{equation} \label{equation:core-end-to-end}
\mathcal{B}_{\calH}(\bfq^{W}) - \mathcal{B}_{\calH}(\bfq^{0})
\, \leq \,
\varepsilon \cdot \mathcal{B}_{\calH}(\bfq^{W}) \, .
\end{equation}
To this end, we fix some
$0 < w \leq W$
and consider the effect of transferring the
$d w$
mass from cactus
$j^{-} = j^{-}(w) \in [n] - [k]$
to cactus
$j^{+} = j^{+}(w) \in [k]$
when generating strategy
$\bfq^{w - d w}$
from strategy
$\bfq^{w}$.
Recalling that
$\mathcal{B}_{\calH}(\bfp) = \sum_{i \in [n]} b_{i}(p_{i})$
for every strategy
$\bfp = (p_{1}, \dots, p_{n})$,
we conclude that
\[
\mathcal{B}_{\calH}(\bfq^{w})
-
\mathcal{B}_{\calH}(\bfq^{w - d w})
\, = \,
\left(
\frac{d}{d w} b_{j^{-}(w)}(q^{w}_{j^{-}(w)})
-
\frac{d}{d w} b_{j^{+}(w)}(q^{w}_{j^{+}(w)})
\right) \, d w \, .
\]
Therefore,
\[
\mathcal{B}_{\calH}(\bfq^{W}) - \mathcal{B}_{\calH}(\bfq^{0})
\, = \,
\int_{0}^{W}
\left(
\frac{d}{d w} b_{j^{-}(w)}(q^{w}_{j^{-}(w)})
-
\frac{d}{d w} b_{j^{+}(w)}(q^{w}_{j^{+}(w)})
\right) \, d w \, .
\]
The following lemma is pivotal for bounding the right hand side.
 
\begin{lemma} \label{lemma:infinitesimal-difference}
The choice of cacti
$j^{-}(w) \in [n] - [k]$
and
$j^{+}(w) \in [k]$
guarantees that
\[
\frac{d}{d w} b_{j^{-}(w)}(q^{w}_{j^{-}(w)})
-
\frac{d}{d w} b_{j^{+}(w)}(q^{w}_{j^{+}(w)})
\, \leq \,
\varepsilon \cdot \frac{d}{d w} b_{j^{-}(w)}(q^{w}_{j^{-}(w)}) \, .
\]
\end{lemma}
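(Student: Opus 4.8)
The plan is to recast the claimed inequality in the equivalent multiplicative form
\[
b_{j^{+}(w)}'\bigl(q^{w}_{j^{+}(w)}\bigr)
\, \geq \,
(1 - \varepsilon) \cdot b_{j^{-}(w)}'\bigl(q^{w}_{j^{-}(w)}\bigr) \, ,
\]
where $b_{i}'$ denotes the first derivative $\frac{d}{d p} b_{i}$ (reading the $\frac{d}{d w}$ in the lemma statement as this derivative evaluated at the indicated point); the equivalence is just the rearrangement $A - B \leq \varepsilon A \iff B \geq (1-\varepsilon) A$. To control the right-hand side, I would first recall from the discussion following Lemma~\ref{lemma:b-function} that each $b_{i}'$ is strictly decreasing on $[0,1]$. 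Since $q^{w}_{j^{-}} \geq 0$, this yields $b_{j^{-}}'(q^{w}_{j^{-}}) \leq b_{j^{-}}'(0) = \chi_{j^{-}}$. Moreover, because $j^{+} \in [k]$, $j^{-} \in [n] - [k]$, and the cacti in this section are ordered so that $\chi_{1} \geq \cdots \geq \chi_{n}$, we have $\chi_{j^{-}} \leq \chi_{k} \leq \chi_{j^{+}}$. Hence it suffices to prove the stronger-looking statement $b_{j^{+}}'(q^{w}_{j^{+}}) \geq (1 - \varepsilon)\,\chi_{j^{+}}$.

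Next I would bound the left-hand side from below. Using again that $b_{j^{+}}'$ is decreasing, together with the defining property \eqref{equation:j-plus} of $j^{+}$, namely $q^{w}_{j^{+}} < 1/k$, we get $b_{j^{+}}'(q^{w}_{j^{+}}) \geq b_{j^{+}}'(1/k)$. Now I would substitute the explicit formulas: from \eqref{equation:b-function-derivatives} and $\chi_{i} = r_{i}(1 - s_{i})/s_{i} = b_{i}'(0)$ a short simplification gives, for every $p \in [0,1]$,
\[
\frac{b_{i}'(p)}{\chi_{i}}
\, = \,
\frac{1}{\left(1 + p \cdot \frac{1 - s_{i}}{s_{i}}\right)^{2}} \, .
\]
Evaluating at $p = 1/k$ and using $s_{j^{+}} \geq \sigma$ (so that $\frac{1 - s_{j^{+}}}{s_{j^{+}}} \leq \frac{1 - \sigma}{\sigma}$, as $x \mapsto (1-x)/x$ is decreasing) together with the prescribed value $k = \frac{2(1-\sigma)}{\varepsilon \sigma}$, the quantity $\frac{1}{k} \cdot \frac{1 - s_{j^{+}}}{s_{j^{+}}}$ is at most $\frac{\varepsilon}{2}$. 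Therefore $b_{j^{+}}'(1/k) \geq \chi_{j^{+}} / (1 + \varepsilon/2)^{2}$.

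Combining the two bounds, the lemma reduces to the elementary inequality $\frac{1}{(1 + \varepsilon/2)^{2}} \geq 1 - \varepsilon$, valid for every $0 < \varepsilon < 1$ because $(1 - \varepsilon)(1 + \varepsilon/2)^{2} = 1 - \tfrac{3}{4}\varepsilon^{2} - \tfrac{1}{4}\varepsilon^{3} \leq 1$. There is no genuine obstacle here; the argument is a direct estimate, and the only points that need care are making sure the monotonicity of $b_{i}'$ is invoked in the correct direction in each of the two bounds, and observing that the constant $k$ fixed in the construction is chosen precisely so that $\frac{1}{k} \cdot \frac{1 - s_{j^{+}}}{s_{j^{+}}} \leq \frac{\varepsilon}{2}$ — which is exactly what makes the final elementary inequality tight enough to close the proof.
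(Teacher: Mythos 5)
Your proposal is correct and follows essentially the same route as the paper's proof: bound the $j^{-}$ derivative above by $\chi_{j^{-}}$, bound the $j^{+}$ derivative below by $\chi_{j^{+}} / (1 + \varepsilon/2)^{2}$ via the bound $q^{w}_{j^{+}} < 1/k$ and $s_{j^{+}} \geq \sigma$, compare $\chi_{j^{+}} \geq \chi_{j^{-}}$ using the ordering, and finish with $(1 + \varepsilon/2)^{-2} \geq 1 - \varepsilon$. The only cosmetic difference is that you normalize the derivative by $\chi_{i}$ before estimating, which streamlines the algebra but changes nothing substantive.
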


Lemma~\ref{lemma:infinitesimal-difference} will soon be proved, but first, let
us explain how it can be used to establish \eqref{equation:core-end-to-end}.
This lemma implies that
\begin{align*}
\int_{0}^{W}
\left(
\frac{d}{d w} b_{j^{-}(w)}(q^{w}_{j^{-}(w)})
-
\frac{d}{d w} b_{j^{+}(w)}(q^{w}_{j^{+}(w)})
\right) \, d w
\, \leq \, &
\varepsilon \cdot
\int_{0}^{W} \frac{d}{d w} b_{j^{-}(w)}(q^{w}_{j^{-}(w)}) \, d w \\
= \, &
\varepsilon \cdot
\sum_{j \in [n] - [k]} \int_{0}^{q^{W}_{j}}
\frac{d}{d w}
b_{j}(q^{w}_{j})
\, d w \\
= \, &
\varepsilon
\sum_{j \in [n] - [k]} \left(b_{j}(q^{W}_{j})
-
b_{j}(q^{0}_{j})\right) \\
= \, &
\varepsilon \cdot \sum_{j \in [n] - [k]} b_{j}(q^{W}_{j}) \\
\leq \, &
\varepsilon \cdot \mathcal{B}_{\calH}(\bfq^{W}) \, ,
\end{align*}
where the fourth transition holds since, by definition,
$q^{0}_{j} = 0$
for every
$j \in [n] - [k]$.

\begin{proof}[Proof of Lemma~\ref{lemma:infinitesimal-difference}]
Let
$j^{-} = j^{-}(w)$
and
$j^{+} = j^{+}(w)$.
Using \eqref{equation:b-function-derivatives}, we get
\[
\frac{d}{d w} b_{j^{-}}(q^{w}_{j^{-}})
\, = \,
r_{j^{-}} \cdot
\frac{\frac{s_{j^{-}}}{1 - s_{j^{-}}}}
{\left( q^{w}_{j^{-}} + \frac{s_{j^{-}}}{1 - s_{j^{-}}} \right)^{2}}
\, < \,
r_{j^{-}} \cdot
\frac{\frac{s_{j^{-}}}{1 - s_{j^{-}}}}
{\left( \frac{s_{j^{-}}}{1 - s_{j^{-}}} \right)^{2}}
\, = \,
\frac{r_{j^{-}} \cdot (1 - s_{j^{-}})}{s_{j^{-}}} \, ,
\]
where the second transition holds by (\ref{equation:j-minus}),
and
\begin{align*}
\frac{d}{d w} b_{j^{+}}(q^{w}_{j^{+}})
\, = \, &
r_{j^{+}} \cdot
\frac{\frac{s_{j^{+}}}{1 - s_{j^{+}}}}
{\left( q^{w}_{j^{+}} + \frac{s_{j^{+}}}{1 - s_{j^{+}}} \right)^{2}} \\
> \, &
r_{j^{+}} \cdot
\frac{\frac{s_{j^{+}}}{1 - s_{j^{+}}}}
{\left(
\frac{\varepsilon \cdot \sigma}{2 \cdot (1 - \sigma)} +
\frac{s_{j^{+}}}{1 - s_{j^{+}}}
\right)^{2}} \\
\geq \, &
r_{j^{+}} \cdot
\frac{\frac{s_{j^{+}}}{1 - s_{j^{+}}}}
{\left(
\frac{\varepsilon}{2} \cdot \frac{s_{j^{+}}}{1 - s_{j^{+}}} +
\frac{s_{j^{+}}}{1 - s_{j^{+}}}
\right)^{2}} \\
= \, &
r_{j^{+}} \cdot
\frac{\frac{s_{j^{+}}}{1 - s_{j^{+}}}}
{\left(
\left( 1 + \frac{\varepsilon}{2} \right) \cdot \frac{s_{j^{+}}}{1 - s_{j^{+}}}
\right)^{2}} \\
= \, &
\frac{r_{j^{+}} \cdot (1 - s_{j^{+}})}{s_{j^{+}}}
\cdot
\frac{1}{\left( 1 + \frac{\varepsilon}{2} \right)^{2}} \, ,
\end{align*}
where the second transition holds by (\ref{equation:j-plus}) and the third
transition follows from the definition of
$\sigma = \min_{i \in [n]} s_{i}$.
Since
$j^{+} < j^{-}$,
it follows that
\[
\frac{r_{j^{+}} \cdot (1 - s_{j^{+}})}{s_{j^{+}}}
\, \geq \,
\frac{r_{j^{-}} \cdot (1 - s_{j^{-}})}{s_{j^{-}}} \, .
\]
Observing that
$\displaystyle
\frac{1}{\left( 1 + \frac{\varepsilon}{2} \right)^{2}}
>
1 - \varepsilon$,
we conclude that
\[
\frac{d}{d w} b_{j^{+}}(q^{w}_{j^{+}})
\, > \,
(1 - \varepsilon) \cdot \frac{d}{d w} b_{j^{-}}(q^{w}_{j^{-}}) \, ,
\]
thus establishing the assertion.
\end{proof}

We conclude this section by showing that the bound on the core size promised
in Theorem~\ref{theorem:small-core} is asymptotically tight.

\begin{lemma}
Fix some
$0 < \varepsilon < \frac{1}{2}$.
For every
$0 < \sigma < 1$
and for every sufficiently large $n$, there exists a hungry bat instance
$\calH = \langle n, \{ r_{i} \}_{i \in [n]}, \{ s_{i} \}_{i \in [n]} \rangle$
with
$\sigma = \min_{i \in [n]} s_{i}$
such that if $\bfp$ is a strategy for $\calH$ with support size
\[
|\Support(\bfp)|
\, \leq \,
\frac{1 - \sigma}{2 \varepsilon \cdot \sigma} \, ,
\]
then
$\mathcal{B}(\bfp) < (1 - \varepsilon) \cdot \mathcal{B}(\bfp^{*})$,
where $\bfp^{*}$ is the optimal strategy of $\calH$.
\end{lemma}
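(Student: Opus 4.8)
The plan is to exhibit a worst-case instance consisting of $n$ \emph{identical} cacti. Concretely, I would set $r_i = 1$ and $s_i = \sigma$ for every $i \in [n]$, so that $\min_{i \in [n]} s_i = \sigma$ as required and, by Lemma~\ref{lemma:b-function}, every cactus carries the same per-round nectar function $b_i(p) = b(p)$, where $b(p) = p/\bigl(p + \sigma/(1-\sigma)\bigr)$, which by \eqref{equation:b-function-derivatives} is strictly concave on $[0,1]$ and satisfies $b(0)=0$. The first step is to identify the two quantities that appear in the target inequality. Since $\mathcal{B}_{\calH}(\bfq) = \sum_{i\in[n]} b(q_i)$ is a concave \emph{and} symmetric function on $\Delta(n)$, its maximum over the simplex is attained at the uniform strategy (average the permutations of any maximizer; uniqueness is also guaranteed by Corollary~\ref{corollary:unique-optimal}), so
\[
\mathcal{B}_{\calH}(\bfp^{*}) \;=\; n\cdot b(1/n) \;=\; \frac{n(1-\sigma)}{(1-\sigma)+n\sigma}\,.
\]
On the other hand, for an \emph{arbitrary} strategy $\bfp$ with $|\Support(\bfp)| = m$, Jensen's inequality applied to the $m$ coordinates in $\Support(\bfp)$ gives
\[
\mathcal{B}_{\calH}(\bfp) \;=\; \sum_{i\in\Support(\bfp)} b(p_i) \;\le\; m\cdot b(1/m) \;=\; \frac{m(1-\sigma)}{(1-\sigma)+m\sigma}\,.
\]

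The second step is an elementary estimate of the ratio of these two expressions. Combining the displays above,
\[
\frac{\mathcal{B}_{\calH}(\bfp)}{\mathcal{B}_{\calH}(\bfp^{*})}
\;\le\;
\frac{m\bigl((1-\sigma)+n\sigma\bigr)}{n\bigl((1-\sigma)+m\sigma\bigr)}
\;=\;
\frac{m\sigma}{(1-\sigma)+m\sigma}
\;+\;
\frac{m(1-\sigma)}{n\bigl((1-\sigma)+m\sigma\bigr)}\,.
\]
Using the hypothesis $m \le \tfrac{1-\sigma}{2\varepsilon\sigma}$ --- equivalently $m\sigma \le \tfrac{1-\sigma}{2\varepsilon}$ --- together with the monotonicity of $x\mapsto \tfrac{x}{(1-\sigma)+x}$, the first summand is at most $\tfrac{1}{1+2\varepsilon}$; and since $(1-\sigma)+m\sigma \ge 1-\sigma$, the second summand is at most $m/n$, which, as $m$ is bounded by the $n$-independent constant $\tfrac{1-\sigma}{2\varepsilon\sigma}$, tends to $0$ as $n\to\infty$.

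The third step exploits the assumption $\varepsilon < \tfrac12$. The inequality $\tfrac{1}{1+2\varepsilon} < 1-\varepsilon$ is equivalent to $0 < \varepsilon(1-2\varepsilon)$, which holds exactly when $0 < \varepsilon < \tfrac12$; write $\gamma := (1-\varepsilon) - \tfrac{1}{1+2\varepsilon} > 0$. Then for every $n > \tfrac{1-\sigma}{2\varepsilon\sigma\gamma}$ (a threshold depending on $\varepsilon$ and $\sigma$ only) and every strategy $\bfp$ with $|\Support(\bfp)| \le \tfrac{1-\sigma}{2\varepsilon\sigma}$ we get $m/n < \gamma$, hence $\mathcal{B}_{\calH}(\bfp)/\mathcal{B}_{\calH}(\bfp^{*}) < \tfrac{1}{1+2\varepsilon} + \gamma = 1-\varepsilon$, as claimed. (If $\tfrac{1-\sigma}{2\varepsilon\sigma} < 1$ the statement is vacuous, since every strategy has support size at least $1$.)

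I do not expect a genuine obstacle here: once the symmetric instance is chosen the argument is a short computation. The two places that need a little care are (i) justifying that the uniform strategy is optimal for the symmetric instance --- handled by concavity and symmetry of $\mathcal{B}_{\calH}$, with uniqueness from Corollary~\ref{corollary:unique-optimal} --- and (ii) taking the supremum over the admissible integer support sizes $m$, which is harmless because both summands in the ratio bound are nondecreasing in $m$, so substituting the extreme values $m\sigma \le \tfrac{1-\sigma}{2\varepsilon}$ and $m \le \tfrac{1-\sigma}{2\varepsilon\sigma}$ suffices.
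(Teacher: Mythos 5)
Your proposal is correct and follows essentially the same route as the paper: the same homogeneous instance ($r_i=1$, $s_i=\sigma$), the same identification of the uniform strategy as optimal (over $[n]$ and over any support of size $m$), and the same key inequality $\tfrac{1}{1+2\varepsilon} < 1-\varepsilon$ for $\varepsilon<\tfrac12$. The only difference is presentational --- you make the ``sufficiently large $n$'' step explicit with a concrete threshold, whereas the paper compares against the limit $\tfrac{1-\sigma}{\sigma}$ as $n\to\infty$.
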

\begin{proof}
Consider the homogeneous hungry bat instance $\calH^{h}_{n}$
that consists of $n$ identical cacti with parameters
$s_{i} = s$
and
$r_{i} = 1$
for every
$i \in [n]$.
By Lemma~\ref{lemma:optimality-condition}, we know that the uniform
probability vector
$\bfp^{u} = \left( \frac{1}{n}, \dots, \frac{1}{n} \right)$
constitutes the optimal strategy for $\calH^{U}_{n}$, yielding
\[
\mathcal{B}_{\calH^{h}_{n}}(\bfp^{u})
\, = \,
n \cdot \frac{\frac{1}{n}}{\frac{1}{n} + \frac{s}{1 - s}}
\, = \,
\frac{1}{\frac{1}{n} + \frac{s}{1 - s}} \, ,
\]
thus
\[
\lim_{n \rightarrow \infty} \mathcal{B}_{\calH^{h}_{n}}(\bfp^{u})
\, = \,
\frac{1 - s}{s} \, .
\]

On the other hand, by the same argument, the optimal strategy for a core
of size
$k \leq n$
is
$\bfp^{u}_{k} = \left( \frac{1}{k}, \dots, \frac{1}{k}, 0, \dots, 0  \right)$,
yielding
\[
\mathcal{B}_{\calH^{h}_{n}}(\bfp^{u}_{k})
\, = \,
\frac{1}{\frac{1}{k} + \frac{s}{1 - s}} \, .
\]
Therefore, if
$k \leq \frac{1 - s}{2 \varepsilon \cdot s}$,
then
\[
\mathcal{B}_{\calH^{h}_{n}}(\bfp^{u}_{k})
\, \leq \,
\frac{1}
{\frac{2 \varepsilon \cdot s}{1 - s} + \frac{s}{1 - s}}
\, = \,
\frac{1 - s}{(1 + 2 \varepsilon) \cdot s}
\, < \,
\frac{1 - s}{s} \cdot (1 - \varepsilon) \, ,
\]
thus establishing the assertion when $n$ is sufficiently large.
\end{proof}

\bibliographystyle{alpha}
\bibliography{references}

\end{document}